\documentclass[onecolumn]{IEEEtran}
\usepackage{multirow}
\usepackage{graphicx}%
\usepackage{multirow}%
\usepackage{amsmath,amssymb,amsfonts}%
\usepackage{amsthm}%
\usepackage{mathrsfs}%
\usepackage{xcolor}%
\usepackage{textcomp}%
\usepackage{manyfoot}%
\usepackage{booktabs}%
\usepackage{algorithm}%
\usepackage{algorithmicx}%
\usepackage{algpseudocode}%
\usepackage{listings}%
\usepackage{cite}

\usepackage{overpic}
\usepackage[h]{esvect}
\usepackage{diagbox}

\newtheorem{theorem}{Theorem}
\newtheorem{corollary}[theorem]{Corollary}%
\newtheorem{lemma}[theorem]{Lemma}%

\newtheorem{example}{Example}%
\newtheorem{definition}{Definition}%

\usepackage{mathtools}
\DeclarePairedDelimiter\abs{\lvert}{\rvert}

\DeclarePairedDelimiter\parenv{\lparen}{\rparen}

\DeclarePairedDelimiter\set{\{}{\}}

\renewcommand{\leq}{\leqslant}

\renewcommand{\geq}{\geqslant}

\newcommand{\cA}{\mathcal{A}}

\newcommand{\cC}{\mathcal{C}}

\newcommand{\cE}{\mathcal{E}}

\newcommand{\cI}{\mathcal{I}}

\newcommand{\cL}{\mathcal{L}}

\newcommand{\N}{\mathbb{N}}
\newcommand{\R}{\mathbb{R}}
\newcommand{\Z}{\mathbb{Z}}

\newcommand{\eqdef}{\triangleq}

\newcommand{\bv}{\boldsymbol{v}}

\newcommand{\bx}{\boldsymbol{x}}
\newcommand{\by}{\boldsymbol{y}}

\DeclareMathOperator{\ccap}{\mathsf{cap}}

\newcommand{\bH}{\mathbf{H}}

\begin{document}


\title{On the Capacity of Sequences of Coloring Channels}

\author{
Wenjun Yu~\IEEEmembership{Member,~IEEE} and Moshe~Schwartz,~\IEEEmembership{Fellow,~IEEE}%
\thanks{Wenjun Yu is with the School
   of Electrical and Computer Engineering, Ben-Gurion University of the Negev,
   Beer Sheva 8410501, Israel
   (e-mail: wenjun@post.bgu.ac.il).}%
\thanks{Moshe Schwartz is with the Department of Electrical and Computer Engineering, McMaster University, Hamilton, ON, L8S 4K1, Canada, and on a leave of absence from the School
   of Electrical and Computer Engineering, Ben-Gurion University of the Negev,
   Beer Sheva 8410501, Israel
   (e-mail: schwartz.moshe@mcmaster.ca).}
}

\maketitle

\begin{abstract}
A single coloring channel is defined by a subset of letters it allows to pass through, while deleting all others. A sequence of coloring channels provides multiple views of the same transmitted letter sequence, forming a type of sequence-reconstruction problem useful for protein identification and information storage at the molecular level. We provide exact capacities of several sequences of coloring channels: uniform sunflowers, two arbitrary intersecting sets, and paths. We also show how this capacity depends solely on a related graph we define, called the pairs graph. Using this equivalence, we prove lower and upper bounds on the capacity, and a tailored bound for a coloring-channel sequence forming a cycle. In particular, for an alphabet of size $4$, these results give the exact capacity of all coloring-channel sequences except for a cycle of length $4$, for which we only provide bounds.
\end{abstract}

\begin{IEEEkeywords}
Coloring channel, capacity, sequence reconstruction
\end{IEEEkeywords}

\section{Introduction}

\IEEEPARstart{T}{he} coloring channels were introduced by~\cite{BarWacYaa25}, motivated by protein-identification applications, and in particular, a method for reading amino-acid sequences that was suggested in~\cite{OhaGirNasSheMel19}. In it, fluorescence markers are attached to the amino acids in a way that allows a nanopore reading to observe them. The general model for this process, suggested in~\cite{BarWacYaa25}, is called the \emph{coloring channel}.

A single coloring channel is defined by a subset of letters from a larger ambient alphabet. When a sequence of letters passes through the channel, only those within the subset get ``colored'' and therefore observed at the channel output. The unobserved letters are hence deleted in the channel output. As an example, if the word ``catapult'' is passed through a coloring channel defined by the letters a, c, and t. The output of the channel is then ``catat'', with the letters l, p, and u, deleted. When ``catapult'' is passed through another coloring channel, we get a different view of the sequence. If this time the coloring channel is defined by the letters c, l, p, and u, we get the output ``cpul''. Given these two outputs, ``catat'' and ``cpul'', we may guess ``catapult'' was transmitted, but another possible transmission might be ``capultat''. Thus, given a sequence of coloring channels, the \emph{coloring-channel problem} requires us to design a code, each of whose codewords are uniquely decodable after passing in parallel through the given channels.

The problem of reconstructing sequences that were passed through multiple deletion channels has a long history, going back to the seminal work of Levenshtein on reconstruction schemes~\cite{Lev01,Lev01a}. In this scheme, a codeword is transmitted in parallel over identical channels, distinct outputs are collected, and a unique decoding is expected. The main goals are finding the minimal number of distinct outputs required for unique decoding given the code used by the transmitter, as well as designing a reconstruction algorithm. These were studied in a sequence of papers~\cite{GabYaa18,AbuYaa21,CaiKiaNguYaa22,ChrKiaYaa22}, recently culminating in~\cite{PhaGoyKia25}, which proved a complete asymptotic solution.

Several crucial differences between the coloring-channel problem and Levenshtein's reconstruction for deletion channels prohibit the use of solutions to the latter when studying the former. First, in Levenshtein's reconstruction all the channels are identical, whereas for the coloring-channel problem, each channel has a different subset defining it. Second, in Levenshtein's reconstruction an adversary operates on every channel. However, in the coloring-channel problem the deletion is determined solely by the subset associated with each channel, and is known in advance. Finally, in Levenshtein's reconstruction the adversary is limited in the sense that a maximal number of deletions is allowed in each channel. In contrast, in the coloring-channel problem the number of deleted symbols is not bounded.

Among the many challenges we may associate with the coloring-channel problem, an important one, and the first stated in~\cite{BarWacYaa25}, is determining the capacity, namely, the asymptotic rate of optimal codes for the given channels. Several capacity results were proved in~\cite{BarWacYaa25}: the capacity of a single coloring channel was determined, as well as the capacity of equal-size pairwise-disjoint channels. A more elaborate case was also addressed, in which two coloring channels are defined by subsets of size $q-1$ each, with an intersection of size $q-2$. It was also proved in~\cite{BarWacYaa25} that a sequence of coloring channels has full capacity if and only if every pair of letters appears together in at least one channel, thus forming a certain covering design.

The main contributions of this paper are as follows: First, we extend the repertoire of coloring-channel sequences for which we know the exact capacity. By viewing these channel sequences as set systems, we can find the exact capacity of uniform sunflowers, two arbitrary intersecting sets, and paths. The capacity is found by solving certain optimization problems, the most difficult of which requires continued fractions and Chebyshev polynomials. These exact capacities generalize the cases for which exact capacities were found in~\cite{BarWacYaa25}. Additionally, we show how certain cases we call \emph{separable} may be reduced to the problem of finding the capacity of a smaller sequence of coloring channels, generalizing the capacity of pairwise-disjoint equal-sized channels proved in~\cite{BarWacYaa25}.

Our second contribution is proving that the capacity of a sequence of coloring channels depends on a graph we call the \emph{pairs graph}. This first implies that, over an alphabet of size $3$, there are essentially only \emph{two} cases of sequences of coloring channels that use all letters, and their capacity may already be obtained through the results of~\cite{BarWacYaa25} (see Table~\ref{tab:q3}). However, already for alphabets of size $4$ the results of~\cite{BarWacYaa25} are insufficient since they only cover two of the possible six cases. Continuing with our contributions, using monotonicity, through the pairs-graph approach we obtain lower and upper bounds on the capacity of general coloring-channel sequences. Additionally, the number of coloring channels may be reduced to the intersection number of the pairs graph, without changing the capacity.

Our final contribution provides specific stronger bounds on the capacity of coloring-channel sequences that form cycles. Using all of these results, for an alphabet of size $4$, we are able to provide exact capacity for all coloring-channel sequences except the cycle of length $4$, for which we only have bounds. These are shown in Table~\ref{tab:q4}.

The paper is organized as follows. We begin in Section~\ref{sec:prelim} by providing notation and definitions used throughout the paper. In Section~\ref{sec:exact} we prove all exact capacity results. In Section~\ref{sec:bound} we define the pairs graph, prove general bounds on the capacity, as well as a specific bound on the capacity of a cycle. We conclude in Section~\ref{sec:conc} with a summary of the results, and a short list of open problems.

\section{Preliminaries}
\label{sec:prelim}

For any $i,j\in\Z$, $i\leq j$, we define $[i,j]\eqdef \set{i,i+1,\dots,j}$. For $n\in\N$ we then define $[n]\eqdef [1,n]$. Consider some finite alphabet $\Sigma$. Without loss of generality, we shall assume throughout the paper that $\Sigma=[q]$ for some integer $q\geq 2$. A sequence (vector) of length $n$ over $\Sigma$ is an $n$-tuple $\bx=x_1x_2\dots x_n$, where $x_i\in\Sigma$ for all $i$. We use $\varepsilon$ to denote the unique sequence of length $0$.

Given a set $A\subseteq [q]$, we use $2^{A}$ for its power set, i.e., $2^A\eqdef \set{B: B\subseteq A}$. We denote $\binom{A}{\ell}\eqdef \set{ B: B\subseteq A, \abs{B}=\ell}$. We also use $A^n\eqdef\set{(a_1,\dots,a_n):a_i\in A,i\in [n]}$, as well as $\overline{A}\eqdef [q]\setminus A$ for the complement. A (finite, undirected) graph $G$ is defined by a pair $G=(V,E)$, where $V$ is some finite set of vertices, and $E\subseteq \binom{V}{2}$ is the set of edges. 

We shall often encounter the binary entropy function, $H(x):(0,1)\to(0,1)$, which is defined as
\[H(x) \eqdef -x\log_2x -(1-x)\log_2(1-x).\]
We extend the function to include $H(0)=H(1)=0$, making it continuous on $[0,1]$. We shall also require the well-known approximation of the binomial coefficient~\cite[p.~309, Lemma 7]{MacSlo78},
\begin{equation}
\label{eq:binom}
\binom{n}{\alpha n (1+o(1)) } = 2^{nH(\alpha)(1+o(1))},
\end{equation}
for all real $\alpha\in[0,1]$, and where $o(1)$ denotes a vanishing function as $n\to\infty$.

We now describe the main channels that we study, the \emph{coloring channels}, following~\cite{BarWacYaa25}. Let $\Sigma=[q]$ be an alphabet of size $q$, and let $\bx = x_1 \dots x_n \in \Sigma^n$ denote a transmitted sequence. We identify a coloring channel $I$ with a non-empty subset of $\Sigma$, i.e., $I\subseteq \Sigma$, $I\neq \emptyset$. Such a noisy channel, $I$, acts symbol-wise via the error function $\cE_I : \Sigma \to \Sigma\cup\set{\varepsilon}$ defined as
\begin{equation}
\label{eq:del}
\cE_I(a) = \begin{cases}
    a & a\in I,\\
    \varepsilon & a\not\in I.
\end{cases}
\end{equation}
For a transmitted sequence $\bx$, the received sequence through channel $I$ is
\[
\bx_I \eqdef \cE_I(x_1) \cE_I(x_2) \dots \cE_I(x_n).
\]
In other words, when a sequence passes through the channel $I$, only letters in $I$ reach the receiver (in the order they were transmitted), whereas all letters in $\overline{I}$ get deleted.

As in~\cite{BarWacYaa25,HorYaa18,JunLaiLeh25}, we generalize our setting to the case where a sequence is transmitted over several channels. Unlike Levenshtein reconstruction~\cite{Lev01,Lev01a}, the channels are not necessarily identical. In the scenario of multiple coloring channels, $\cI = (I_1, \ldots, I_t)$, for any input sequence $\bx \in \Sigma^n$, the received outputs across all channels can be represented as a tuple of sequences
\[\bx_\cI \eqdef (\bx_{I_1},\bx_{I_2},\dots,\bx_{I_t}).\]
We emphasize that the ordering of the channels is arbitrary and needed for channel-identification purposes only. Thus, by abuse of terminology, we shall sometimes refer to $\cI$ as a \emph{set system}, when the order of the channels is immaterial.

\begin{example}
    Let $q=3$, so $\Sigma=\set{1,2,3}$. Let us take $\bx = 3122123\in \Sigma^7$, $I_1 = \set{1,3}$, $I_2=\set{1,2}$, and $\cI=(I_1,I_2)$. Under the coloring channel of~\eqref{eq:del}, we have
    \begin{align*}
    \bx_{I_1}&=3113, & \bx_{I_2}&=12212,
    \end{align*}
    so over $\cI$ we receive,
    \[
    \bx_\cI = (3113,12212).
    \]
\end{example}

We say that two distinct sequences, $\bx,\by\in\Sigma^n$ are \emph{confusable} if $\bx_\cI= \by_\cI$. Intuitively, if that is the case, and $\bx_\cI=\by_\cI$ is received, then the receiver has no way of knowing whether $\bx$ was transmitted and corrupted by the channels into becoming the observed sequences, or whether it was $\by$. A \emph{reconstruction code} for the channels $\cI$ is a subset $\cC \subseteq \Sigma^n$ that does not contain any pair of confusable sequences, namely, $\bx_\cI \neq \by_\cI$ for all distinct $\bx,\by \in \cC$. This condition ensures that every possible received sequence can be uniquely traced back to its transmitted codeword, achieving error-free decoding.

We conveniently denote the set of all possible channel outputs as
\[
\cA_{\cI}(n) \eqdef \set*{\bx_{\cI} : \bx \in \Sigma^n}.
\]
If $n$ is understood from context, we simply write $\cA_{\cI}$. Trivially, confusability (with respect to $\cI$) is an equivalence relation on $\Sigma^n$, and the number of equivalence classes is simply $\abs{\cA_{\cI}}$. Thus, an optimal reconstruction code $\cC$ for channels $\cI$ has exactly one codeword from each equivalence class, and therefore $\abs{\cC}=\abs{\cA_{\cI}}$. The main goal of this paper is to find the asymptotic rate of optimal reconstruction codes for $\cI$, called the \emph{capacity of $\cI$}, and defined by
\[
\ccap(\cI) \eqdef \limsup_{n\to\infty} \frac{1}{n} \max_{\cC} \log_q\abs{\cC}=\limsup_{n\to\infty} \frac{1}{n}\log_q \abs{\cA_{\cI}},
\]
where the maximization is over all reconstruction codes $\cC$ for the sequence of coloring channels $\cI$.

\section{Exact Capacity}
\label{sec:exact}

In this section, we provide the exact capacity of sequences of coloring channels for various useful set systems. In particular, we find the exact capacity of uniform sunflowers, two intersecting sets, and paths.

We begin with some simple observations, with the goal of avoiding trivial cases. Consider the alphabet $\Sigma=[q]$, and a sequence of coloring channels $\cI=(I_1,I_2,\dots,I_t)$, where $I_i\subseteq \Sigma$ for all $i\in[t]$. We first observe that if for some $i\neq j$ we have $I_i\subseteq I_j$, then we can trivially discard $I_i$ completely without affecting the capacity. This follows since for any sequence $\bx\in\Sigma^n$, $\bx_{I_j}$ completely determines $\bx_{I_i}$. Thus, we may assume that no coloring channel is contained in another.

Our next observation is slightly more elaborate. We require the following definition.

\begin{definition}
Let $\cI=(I_1,\dots,I_t)$ be a sequence of coloring channels over $\Sigma$. We say $\cI$ is \emph{separable} if there exists a proper non-empty subset, $\emptyset\subset S \subset [t]$ such that
\[
\bigcup_{i\in S}I_i \cap \bigcup_{i\in\overline{S}} I_i = \emptyset.
\]
\end{definition}

We claim that when a sequence of coloring channels is separable, then the problem of determining the capacity is reduced to a smaller sequence of coloring channels.

\begin{lemma}
\label{lem:trivial}
Let $\cI=(I_1,\dots,I_t)$ be a sequence of coloring channels over $\Sigma=[q]$. Assume $\cI$ is separable with subset $\emptyset\subset S\subset [t]$, and define $\cI_1 =(I_i)_{i\in S}$, $\cI_2 = (I_i)_{i\in \overline{S}}$. Let $c_1=\ccap(\cI_1)$ and $c_2=\ccap(\cI_2)$. Then
\[
\ccap(\cI)= \max \set*{ c_1, c_2 }.
\]
\end{lemma}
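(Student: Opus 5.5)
Let $U_1=\bigcup_{i\in S}I_i$ and $U_2=\bigcup_{i\in\overline{S}}I_i$, which are disjoint by assumption, and let $D=\Sigma\setminus(U_1\cup U_2)$ be the letters seen by no channel. The plan is to count $\abs{\cA_{\cI}(n)}$ exactly in terms of $\abs{\cA_{\cI_1}(m)}$ and $\abs{\cA_{\cI_2}(m)}$ for smaller $m$, and then compare exponential growth rates. The key observation is that $\bx_{\cI}$ is determined by, and determines, the pair $(\bx_{\cI_1},\bx_{\cI_2})$. Moreover, $\bx_{\cI_1}$ depends only on the subsequence $\bx_{U_1}$, and $\bx_{\cI_2}$ only on $\bx_{U_2}$. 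Because $U_1\cap U_2=\emptyset$, the relative interleaving of $U_1$-letters and $U_2$-letters is invisible to every channel.

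\textbf{Structure of $\cA_{\cI}(n)$.} For a word $\by$ over $U_1$ of length $m$, the outputs $\by_{\cI_1}$ range over $\cA_{\cI_1}$ restricted to words over $U_1$. Letters outside $U_1$ are deleted by all channels in $S$, so this set is exactly $\cA_{\cI_1}(m)$ when the input is taken over $U_1$. Two caveats need handling.
\begin{itemize}
\item The ambient alphabet for $\cI_1$ is still $\Sigma$, so an input of length $n$ may contain only $m\le n$ letters of $U_1$.
\item Hence $\cA_{\cI_1}(n)=\bigcup_{m\le n}\cA^{U_1}_{\cI_1}(m)$, where the superscript means the inputs are restricted to words over $U_1$.
\end{itemize}
Writing $a_m=\abs{\cA^{U_1}_{\cI_1}(m)}$ and $b_m=\abs{\cA^{U_2}_{\cI_2}(m)}$, the two caveats give
\[
\abs{\cA_{\cI_1}(n)}=\sum_{m\le n}a_m,\qquad \abs{\cA_{\cI}(n)}=\sum_{m_1+m_2\le n}a_{m_1}b_{m_2}.
\]
The first equality holds because outputs from different $m$ have different total length. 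The second holds because any pair with $m_1+m_2\le n$ is realized: concatenate a $U_1$-word, a $U_2$-word, and fill the remaining positions with letters from $D$. If $D=\emptyset$, pad with extra letters from $U_1$ or $U_2$ instead, which affects only a sum over a slightly different index set. To stay safe, I would prove only the two inequalities below rather than exact equality.

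\textbf{Bounds.} For the lower bound, $\cI_1$ is a subsequence of $\cI$, so $\bx_{\cI}$ determines $\bx_{\cI_1}$. Thus $\abs{\cA_{\cI}(n)}\ge\abs{\cA_{\cI_1}(n)}$, and likewise for $\cI_2$, giving $\ccap(\cI)\ge\max\{c_1,c_2\}$.

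For the upper bound, the map $\bx_{\cI}\mapsto(\bx_{\cI_1},\bx_{\cI_2})$ is injective, and in the image the output lengths satisfy $m_1+m_2\le n$. So
\[
\abs{\cA_{\cI}(n)}\le\sum_{m_1+m_2\le n}\abs{\cA_{\cI_1}(m_1)}\cdot\abs{\cA_{\cI_2}(m_2)},
\]
where here I use the honest length-$m_i$ output sets. This is legitimate because a $U_1$-subsequence of length $m_1$ arises from some word over $U_1$ of length $m_1$, which is a word in $\Sigma^{m_1}$.

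It remains to show that this convolution has exponent at most $\max\{c_1,c_2\}$. Fix $\delta>0$. By the definition of $\ccap$ as a limsup there is a constant $K_\delta$ with
\[
\abs{\cA_{\cI_i}(m)}\le K_\delta\, q^{m(c_i+\delta)}\quad\text{for all } m.
\]
Then each term is at most $K_\delta^2\, q^{(m_1+m_2)(\max\{c_1,c_2\}+\delta)}\le K_\delta^2\, q^{n(\max\{c_1,c_2\}+\delta)}$, using $c_i\ge 0$. There are $O(n^2)$ terms, so
\[
\ccap(\cI)\le\max\{c_1,c_2\}+\delta,
\]
and letting $\delta\to 0$ finishes the upper bound.

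\textbf{Main obstacle.} The step needing the most care is the upper bound's reduction of length-$m$ subsequence outputs to the length-$m$ output sets. The key fact is that $\abs{\cA_{\cI_1}(m)}$ counts outputs of all words in $\Sigma^m$, which includes words over $U_1$. A secondary subtlety is the limsup: the uniform bound $K_\delta q^{m(c_i+\delta)}$ holds for every $m$ because only finitely many $m$ can violate $q^{m(c_i+\delta)}$.
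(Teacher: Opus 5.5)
Your proof is correct and follows essentially the same route as the paper. The lower bound comes from $\bx_\cI$ determining $\bx_{\cI_1}$, and the upper bound from mapping $\cA_\cI(n)$ injectively into a convolution of the output sets of $\cI_1$ and $\cI_2$, together with a uniform bound $\abs{\cA_{\cI_i}(m)}\leq K_\delta q^{m(c_i+\delta)}$ and only polynomially many terms.

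Your bookkeeping is slightly more careful than the paper's, which states \eqref{eq:prod} as an equality. That identity overcounts: for $\cI=(\set{1},\set{2})$ with $q=2$, the left side is $n+1$ while the right side is $\binom{n+3}{3}$. So, as you did, one should claim only the inequality and argue the lower bound directly.
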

\begin{IEEEproof}
Assume w.l.o.g. that $c_1\geq c_2$, and that $S=\set{1,\dots,s}$, $\overline{S}=\set{s+1,\dots,t}$. Let us further denote $T_i(n)\eqdef \abs{\cA_{\cI_i}(n)}$ for $i=1,2$. It now follows that
\begin{equation}
\label{eq:prod}
\abs*{\cA_{\cI}(n)}=\sum_{i=0}^{n} T_1(i)T_2(n-i).
\end{equation}
Let $\delta>0$ be arbitrarily small. By definition, for all sufficiently large $\ell$,
\begin{align*}
T_i(\ell) \leq q^{(c_i+\delta)\ell}, \qquad \text{for $i=1,2$.}
\end{align*}
Hence, there exist real constants $\gamma_1,\gamma_2$ such that for all $\ell$,
\begin{align*}
T_i(\ell) \leq \gamma_i q^{(c_i+\delta)\ell}, \qquad \text{for $i=1,2$.}
\end{align*}
Plugging this into~\eqref{eq:prod}, and using $c_1 \geq c_2$, we obtain
\[
\abs*{\cA_{\cI}(n)}=\sum_{i=0}^{n} T_1(i)T_2(n-i) \leq n\gamma_1\gamma_2 q^{(c_1+\delta)n}.
\]
Using this in the definition for capacity we therefore have
\[
\ccap(\cI) \leq c_1+\delta,
\]
and since this holds for all $\delta>0$, necessarily
\[
\ccap(\cI) \leq c_1.
\]
For the reverse inequality we note that
\[
\abs*{\cA_{\cI}(n)} = \sum_{i=0}^{n} T_1(i)T_2(n-i) \geq T_1(n),
\]
implying
\[
\ccap(\cI) \geq c_1,
\]
which completes the proof.
\end{IEEEproof}

It is known~\cite[Lemma 1]{BarWacYaa25} that if $\cI=(I_1)$, i.e., there is a single coloring channel, then $\ccap(\cI)=\log_q\abs{I}$. Additionally, if $\cI=(I_1,\dots,I_t)$, $\abs{I_1}=\dots=\abs{I_t}$, and $I_i\cap I_j=\emptyset$ for all $i\neq j$, then it was proved in~\cite[Theorem 3]{BarWacYaa25} that $\ccap(\cI)=\log_q\abs{I_1}$. This type of result is immediately extended by the following corollary to channels of arbitrary size:

\begin{corollary}
Let $\cI=(I_1,\dots,I_t)$ be a sequence of coloring channels over $\Sigma=[q]$. Assume $I_i\cap I_j=\emptyset$ for all $i\neq j$. Then
\[
\ccap(\cI) = \log_q \max\set*{\abs*{I_1},\dots,\abs*{I_t}}.
\]
\end{corollary}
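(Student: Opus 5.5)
We argue by induction on the number of channels $t$, using Lemma~\ref{lem:trivial} together with the single-channel capacity $\ccap((I))=\log_q\abs{I}$ from~\cite[Lemma 1]{BarWacYaa25}.

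For $t=1$ the claim is exactly the single-channel result. Now let $t\geq 2$. Because the channels are pairwise disjoint, the sequence $\cI$ is separable with $S=\set{1,\dots,t-1}$. Indeed,
\[
\bigcup_{i\in S} I_i \cap I_t=\bigcup_{i\in S}(I_i\cap I_t)=\emptyset .
\]
Lemma~\ref{lem:trivial} therefore gives
\[
\ccap(\cI)=\max\set*{\ccap(\cI_1),\ccap((I_t))},
\qquad \cI_1=(I_1,\dots,I_{t-1}).
\]
The channels of $\cI_1$ are still pairwise disjoint, so the induction hypothesis gives $\ccap(\cI_1)=\log_q\max\set{\abs{I_1},\dots,\abs{I_{t-1}}}$. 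The single-channel result gives $\ccap((I_t))=\log_q\abs{I_t}$. Since $\log_q$ is monotone, the maximum of these two values is $\log_q\max\set{\abs{I_1},\dots,\abs{I_t}}$, which completes the induction.

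No step here is a real obstacle. The only point to check is that Lemma~\ref{lem:trivial} applies as stated, with $\cI_1$ and $\cI_2$ being subsequences of channels over the same alphabet $\Sigma$. That is exactly the setting of the lemma, so the argument goes through.
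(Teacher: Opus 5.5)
Your proof is correct and follows the paper's own argument: the single-channel capacity from \cite[Lemma 1]{BarWacYaa25} combined with Lemma~\ref{lem:trivial}. You simply write out explicitly the induction on $t$ that the paper leaves implicit when it applies Lemma~\ref{lem:trivial} repeatedly.
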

\begin{IEEEproof}
By~\cite[Lemma 1]{BarWacYaa25}, the capacity of a single channel is $\ccap((I_i))=\log_q\abs{I_i}$. We then use Lemma~\ref{lem:trivial}.
\end{IEEEproof}

In what follows, in order to avoid trivial cases we define the following type of set systems.

\begin{definition}
Let $\cI=(I_1,\dots,I_t)$ be a sequence of coloring channels over $\Sigma$. We say $\cI$ is \emph{irreducible} if $I_i\not\subseteq I_j$ for all $i\neq j$, and $\cI$ is not separable.
\end{definition}

\subsection{Uniform sunflowers}

The first irreducible set family for which we compute the exact capacity is a uniform sunflower, which is defined as follows:

\begin{definition}
A set family $\cI = \set{I_1,\ldots,I_t}\subseteq 2^{[q]}$ is called a \emph{$(k,p,t)$-sunflower} if all the following conditions hold:
\begin{enumerate}
\item 
$\abs{\bigcap_{i\in [t]} I_i} = k$,
\item
$\abs{I_i} = k+p$, for all $i\in [t]$,
\item
$I_u\cap I_v = \bigcap_{i\in [t]}I_i$, for all $u\neq v \in [t]$.
\end{enumerate}
\end{definition}

\begin{theorem}\label{thm:sunflower}
Fix an alphabet $\Sigma=[q]$, and a sequence of coloring channels $\cI = (I_1,\dots,I_t)$, $I_i\subseteq\Sigma$ for all $i$. If $\cI$ is a $(k,p,t)$-sunflower, $k,p,t\geq 1$, then
\[\ccap(\cI) = g(y^*),\]
where 
\[g(y) = (1-y)\log_q k + y\log_q p + (t-(t-1)y)H\parenv*{\frac{y}{t-(t-1)y}}\log_q 2,\]
and where $y^*$ is the unique root of
\[\frac{pt (1-y)^t}{ky (1-(t-1)y/t)^{t-1}} = 1,\]
in $(0,1)$.
\end{theorem}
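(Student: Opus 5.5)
The plan is to turn $\abs{\cA_\cI(n)}$ into an explicit combinatorial sum, reduce its growth rate to a concave maximization, and solve that maximization by calculus. Write $K=\bigcap_i I_i$ for the core ($\abs{K}=k$) and $P_i=I_i\setminus K$ for the petals ($\abs{P_i}=p$, pairwise disjoint).

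\textbf{Step 1: exact count.} For $\bx\in\Sigma^n$, the tuple $\bx_\cI$ is equivalent to the following data:
\begin{itemize}
\item the core subsequence $\bx_K$;
\item for each $i$, the petal subsequence $\bx_{P_i}$;
\item for each $i$, the interleaving of $\bx_{P_i}$ with $\bx_K$.
\end{itemize}
The relative order of letters from different petals is never observed, and letters outside $\bigcup_i I_i$ are never observed. Conversely, any choice of these data (with total length at most $n$, or exactly $n$ if $\bigcup_i I_i=\Sigma$) is realized by some $\bx$: list the petal-$1$ letters lying between consecutive core letters, then the petal-$2$ letters, and so on, padding with unobserved letters. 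Hence, with $m_0$ core letters and $m_i$ letters from petal $i$,
\[
\abs{\cA_\cI(n)}=\sum_{m_0,m_1,\dots,m_t} k^{m_0}p^{m_1+\dots+m_t}\prod_{i=1}^t\binom{m_0+m_i}{m_i},
\]
summed over $m_0+\sum_i m_i\leq n$ (or $=n$).

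\textbf{Step 2: reduce to a maximization.} There are only polynomially many terms. Using~\eqref{eq:binom}, $\ccap(\cI)$ is the maximum of the normalized exponent
\[
F(a,b_1,\dots,b_t)=a\log_q k+\Big(\sum_i b_i\Big)\log_q p+\log_q 2\sum_i f(a,b_i),
\qquad f(a,b)=(a+b)H\Big(\tfrac{b}{a+b}\Big),
\]
over $a+\sum_i b_i\leq 1$, with $a,b_i\geq 0$. Since $F$ is positively $1$-homogeneous and nonnegative, the maximum is attained on $a+\sum_i b_i=1$, so the two cases of the constraint coincide. Since $f$ is the perspective of the concave function $H$, it is jointly concave. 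Hence $F$ is concave and symmetric in the $b_i$, and averaging the $b_i$ shows the maximum is attained at $b_i=y/t$, $a=1-y$. Substituting, $a+b=1-(t-1)y/t$, and $t\,f(a,b)$ becomes exactly $(t-(t-1)y)H\big(\tfrac{y}{t-(t-1)y}\big)$. So $\ccap(\cI)=\max_{y\in[0,1]}g(y)$.

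\textbf{Step 3: maximize $g$.} Differentiate using $\partial_b f=\log_2\frac{a+b}{b}$ and $\partial_a f=\log_2\frac{a+b}{a}$. This gives
\[
g'(y)=\log_q\frac{p}{k}+\log_q\frac{a+b}{b}-t\log_q\frac{a+b}{a}.
\]
Setting $g'(y)=0$ and substituting $a=1-y$, $b=y/t$ yields precisely
\[
\frac{pt(1-y)^t}{ky\,(1-(t-1)y/t)^{t-1}}=1.
\]
The main obstacle is existence and uniqueness of this root in $(0,1)$, together with the claim that it is the maximizer. Concavity of $g$ follows from Step 2, since $g$ is a concave function composed with an affine map. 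For uniqueness, the plan is to show the left-hand side is strictly decreasing in $y$: its logarithm has derivative
\[
-\frac{t}{1-y}-\frac1y+\frac{(t-1)^2/t}{1-(t-1)y/t},
\]
and this should be negative, checked directly via $1-(t-1)y/t\geq 1-y$ and $(t-1)^2/t<t$. For existence, the left-hand side tends to $+\infty$ as $y\to0^+$ and to $0$ as $y\to1^-$. So $g'$ changes sign exactly once, the root $y^*$ is the unique maximizer, and $\ccap(\cI)=g(y^*)$.
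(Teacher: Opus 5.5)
Your proposal is correct and follows essentially the same route as the paper: the same core/petal decomposition with exact count $k^{m_0}p^{\sum_i m_i}\prod_i\binom{m_0+m_i}{m_i}$, reduction to a concave maximization, symmetrization of the petal variables by concavity, and a one-variable analysis of $g$ using that $g'(y)$ is $\log_q$ of the stated ratio. The differences are cosmetic: you handle the unobserved letters via positive homogeneity of $F$ (a bit cleaner than the paper's remark that the maximum sits at $\alpha_{-1}=0$), and you obtain uniqueness of $y^*$ by showing the ratio is strictly decreasing rather than computing $g''(y)=-t/(\ln q\,y(1-y)(t-(t-1)y))$ explicitly.
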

\begin{IEEEproof}
We partition the alphabet $\Sigma$ into $t+2$ parts, 
\begin{align*}
K &\eqdef \bigcap_{\ell \in [t]} I_\ell, &
I_\ell^* &\eqdef I_\ell \setminus K, &
L &\eqdef \Sigma \setminus \bigcup_{\ell \in [t]} I_\ell.
\end{align*}
For integers $i_1, \dots, i_t, j_1, j_2 \geq 0$ with $j_1 + j_2 + \sum_{\ell \in [t]} i_\ell = n$, we define $\cA^{i_1,\ldots,i_t,j_1,j_2}$ as the set of all sequences $\bx \in \Sigma^n$ such that $\bx$ contains exactly $i_\ell$ entries from $I_\ell^*$ for each $\ell \in [t]$, $j_1$ entries from $K$, and $j_2$ entries from $L$. We then define
\[
\cA_{\cI}^{i_1,\ldots,i_t,j_1,j_2} \eqdef \set*{\bx_{\cI} : \bx \in \cA^{i_1,\ldots,i_t,j_1,j_2}}.
\]

Fix some $\bx_{\cI} = (\bx_{I_1}, \ldots, \bx_{I_t}) \in  \cA_{\cI}^{i_1,\ldots,i_t,j_1,j_2}$. For each $\ell\in[t]$,  
$\bx_{I_\ell}$ is an interleaving of $\bx_K$ and $\bx_{I_\ell^*}$. The number of ways to combine $\bx_{I_\ell^*}$ with $\bx_K$ in $\cA_{\cI}^{i_1,\ldots,i_t,j_1,j_2}$ is $\binom{j_1 + i_\ell}{i_\ell}$, which corresponds to the number of ways to insert $i_\ell$ entries into a sequence of length $j_1$. Hence, we have
\[
\abs*{\cA_{\cI}^{i_1,\ldots,i_t,j_1,j_2}} = k^{j_1} p^{\sum_{\ell \in [t]} i_\ell} \prod_{\ell \in [t]} \binom{j_1 + i_\ell}{i_\ell}.
\]

Since there are at most $n^{t+2}$ choices of $(i_1,\dots,i_t,j_1,j_2)$, and
\[
\cA_{\cI} = \bigcup_{\substack{i_1,\dots,i_t,j_1,j_2 \geq 0 \\ j_1 + j_2 + \sum_{\ell \in [t]} i_\ell = n}} \cA_{\cI}^{i_1,\ldots,i_t,j_1,j_2},
\]  
we obtain
\[
\max \abs*{\cA_{\cI}^{i_1,\ldots,i_t,j_1,j_2}} \leq \abs*{\cA_{\cI}} \leq n^{t+2} \max \abs*{\cA_{\cI}^{i_1,\ldots,i_t,j_1,j_2}}.
\]
Since $\frac{(t+2) \log_q n}{n} \to 0$ as $n\to\infty$, the desired capacity is
\[
\ccap(\cI) = \limsup_{n \to \infty} \frac{1}{n}\log_q {\abs*{\cA_{\cI}}}  = \limsup_{n \to \infty} \frac{1}{n}\log_q \max \abs*{\cA_{\cI}^{i_1,\ldots,i_t,j_1,j_2}}.
\]

Define 
\begin{align*}
\alpha_{-1} &\eqdef \frac{j_2}{n}, & \alpha_0 &\eqdef \frac{j_1}{n}, & \alpha_\ell &\eqdef \frac{i_\ell}{n},
\end{align*}
for all $\ell\in[t]$, so that $\alpha_{-1} + \alpha_0 + \sum_{\ell \in [t]} \alpha_\ell = 1$. Then, by using~\eqref{eq:binom},
\[
\frac{1}{n} \log_q \abs*{\cA_{\cI}^{i_1,\ldots,i_t,j_1,j_2}} 
= \alpha_0 \log_q k + \sum_{\ell \in [t]} \alpha_\ell \log_q p + \sum_{\ell \in [t]} (\alpha_0 + \alpha_\ell) H\parenv*{\frac{\alpha_\ell}{\alpha_0 + \alpha_\ell}} \log_q 2 + o(1).
\]

For fixed $\alpha_{-1}, \alpha_0 \in [0,1]$, define
\[
M(x_1,\ldots,x_t) \eqdef \alpha_0 \log_q k + \sum_{\ell \in [t]} x_\ell \log_q p + \sum_{\ell \in [t]} (\alpha_0 + x_\ell) H\parenv*{\frac{x_\ell}{\alpha_0 + x_\ell}} \log_q 2,
\] 
where $x_\ell \in [0,1]$. Then,
\[
\ccap(\cI) = \max_{\sum_{\ell \in [t]} x_\ell = 1 - \alpha_0 - \alpha_{-1}} M(x_1,\ldots,x_t) = \max_{\sum_{\ell \in [t]} x_\ell = 1 - \alpha_0} M(x_1,\ldots,x_t),
\]
where the last equality holds because $M$ does not depend on $\alpha_{-1}$, so the maximum is trivially achieved at $\alpha_{-1} = 0$. If we define $\overline{x}=\sum_{i\in[t]}x_t /t$ to be the average of $x_1,\dots,x_t$, then by concavity
\begin{align*}
M(x_1,\ldots,x_t) &= \alpha_0 \log_q k + (1-\alpha_0) \log_q p + \sum_{\ell \in [t]} (\alpha_0 + x_\ell) H\parenv*{\frac{x_\ell}{\alpha_0 + x_\ell}} \log_q 2 \\
& \leq \alpha_0 \log_q k + (1-\alpha_0) \log_q p + t (\alpha_0 + \overline{x}) H\parenv*{\frac{\overline{x}}{\alpha_0 + \overline{x}}} \log_q 2 \\
&= M(\overline{x},\dots,\overline{x}).
\end{align*}
Thus, the maximum is attained at $x_1 = \dots = x_t \eqdef x$. Then, we simply need to find
\[
\max_{0 \leq x \leq 1/t} (1 - t x) \log_q k + t x \log_q p + t \big(1 - (t-1)x \big) H\parenv*{ \frac{x}{1 - (t-1)x} } \log_q 2.
\]
By substituting $y = t x$, we obtain that the desired maximum is $\max_{0 \leq y \leq 1} g(y)$, where 
\[
g(y) \eqdef (1-y) \log_q k + y \log_q p + \parenv*{t - (t-1) y} H\parenv*{ \frac{y}{t - (t-1)y} } \log_q 2.
\]
By direct calculation, we have the first and second derivatives of $g(y)$,
\begin{align*}
    g'(y) &= \log_q \parenv*{ \frac{p t (1-y)^t}{k y (1-(t-1)y/t)^{t-1}} },\\
    g''(y) &= - \frac{t}{\ln q \, y (1-y) (t-(t-1)y)} < 0.
\end{align*}
Moreover, $\lim_{y \to 0^+} g'(y) = +\infty$ and $\lim_{y \to 1^-} g'(y) = -\infty$, so there exists a unique root $y^* \in (0,1)$ such that $g'(y^*) = 0$. Thus, the desired maximum is $g(y^*)$.
\end{IEEEproof}

Theorem~\ref{thm:sunflower} contains~\cite[Th.~2]{BarWacYaa25} as a special case, as the latter is simply a $(q-2,1,2)$-sunflower. It significantly extends the cases for which we can calculate the capacity precisely. 

\subsection{Two intersecting sets}

If in the previous section we discussed a very uniform structure (same sized sets with a strict intersection configuration), here we relax these conditions but focus on two sets only.

\begin{theorem}\label{thm:twosets}
Fix an alphabet $\Sigma=[q]$, and a sequence of two coloring channels $\cI = (I_1,I_2)$ with $I_1,I_2 \subseteq \Sigma_q$. Furthermore, assume
\begin{align*}
\abs*{I_1\cap I_2} &= k, & \abs*{I_1\setminus I_2} &= p_1, & \abs*{I_2\setminus I_1} &= p_2,
\end{align*}
for some integers $k,p_1,p_2\geq 1$. Then
\[\ccap(\cI) = M(x^*_1,x^*_2),\]
where
\begin{align*}
M(x_1,x_2) &\eqdef (1-x_1-x_2) \log_q k +  x_1 \log_q p_1 + x_2 \log_q p_2 \\
&\qquad + (1-x_2) H\parenv*{\frac{x_1}{1-x_2}} \log_q 2 + (1-x_1) H\parenv*{\frac{x_2}{1-x_1}} \log_q 2,
\end{align*}
and
\begin{equation}
\label{eq:x1x2}
\begin{split}
x^*_1 &= \frac{1}{2}-\frac{k+p_2-p_1}{2\sqrt{(k+p_1+p_2)^2-4p_1p_2}}, \\
x^*_2 &= \frac{1}{2}-\frac{k+p_1-p_2}{2\sqrt{(k+p_1+p_2)^2-4p_1p_2}}.
\end{split}
\end{equation}
\end{theorem}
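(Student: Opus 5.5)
We follow the proof of Theorem~\ref{thm:sunflower}, which becomes the $t=2$ case with unequal petals. Partition $\Sigma$ into $K\eqdef I_1\cap I_2$, $I_1^*\eqdef I_1\setminus I_2$, $I_2^*\eqdef I_2\setminus I_1$, and $L\eqdef\Sigma\setminus(I_1\cup I_2)$. Let $\cA^{i_1,i_2,j_1,j_2}$ be the set of sequences with $i_\ell$ letters from $I_\ell^*$, $j_1$ from $K$, and $j_2$ from $L$. The channel output is determined by three things: $\bx_K$, the sequences $\bx_{I_\ell^*}$, and the interleaving pattern of $\bx_{I_\ell^*}$ with $\bx_K$ inside $\bx_{I_\ell}$, for each $\ell$. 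Each such choice is realizable, so
\[
\abs*{\cA_{\cI}^{i_1,i_2,j_1,j_2}}=k^{j_1}p_1^{i_1}p_2^{i_2}\binom{j_1+i_1}{i_1}\binom{j_1+i_2}{i_2}.
\]
There are only polynomially many class vectors, so the capacity equals the limit of the normalized log of the largest class. Letters of $L$ contribute nothing, so we may take $j_2=0$. Then $j_1/n=1-x_1-x_2$, where $x_\ell=i_\ell/n$, and~\eqref{eq:binom} turns the exponent into exactly $M(x_1,x_2)$. Hence $\ccap(\cI)=\max M$ over the simplex $x_1,x_2\geq 0$, $x_1+x_2\leq 1$.

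Next we maximize $M$. The function $M$ is a sum of linear terms and perspective functions of the concave $H$. Indeed, $(1-x_2)H(x_1/(1-x_2))$ is the perspective of $H$ evaluated at $(x_1,1-x_2)$, which is jointly concave since $(x_1,x_2)\mapsto(x_1,1-x_2)$ is affine. So $M$ is concave, and any interior stationary point is the global maximum. To locate it, differentiate in $\log_2$ units, with $a\eqdef 1-x_1-x_2$. The term $(1-x_2)H(x_1/(1-x_2))$ has $\partial_{x_1}=\log_2(a/x_1)$ and $\partial_{x_2}=\log_2((1-x_2)/a)$, and symmetrically for the second term. The stationarity conditions therefore read
\[
\frac{p_1}{k}\cdot\frac{a}{x_1}\cdot\frac{1-x_1}{a}=1,\qquad \frac{p_2}{k}\cdot\frac{a}{x_2}\cdot\frac{1-x_2}{a}=1\quad(\text{after collecting terms}).
\]
The exact bookkeeping must be redone carefully. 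I expect a system of the form $p_1 a(1-x_1)=k x_1(1-x_2)$, and symmetrically with the indices swapped.

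Solving this system is the main obstacle. First I would subtract and manipulate the two equations to derive a linear relation between $x_1$ and $x_2$. Candidates suggested by~\eqref{eq:x1x2} are $x_1-x_2=\frac{p_1-p_2}{R}$ and $x_1+x_2=1-\frac{k}{R}$, where $R\eqdef\sqrt{(k+p_1+p_2)^2-4p_1p_2}$. Substituting back should give a quadratic whose positive root is $R$. Alternatively, and more safely, I would simply verify that the claimed $(x_1^*,x_2^*)$ satisfies both stationarity equations, using $R^2=(k+p_1+p_2)^2-4p_1p_2$. I would also check that it lies in the open simplex: $R>\abs{k+p_2-p_1}$ and $R>\abs{k+p_1-p_2}$ give $x_\ell^*\in(0,1)$, and $k\le R$ gives $x_1^*+x_2^*<1$. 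Finally, I would confirm the maximum is not on the boundary. Near $x_\ell=0$ the partial derivative is $+\infty$ because of the $\log(1/x_\ell)$ terms, and similarly near $a=0$. Concavity then makes the interior critical point the unique global maximizer, which gives $\ccap(\cI)=M(x_1^*,x_2^*)$.
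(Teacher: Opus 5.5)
Your route is the paper's route: the same partition into $K,I_1^*,I_2^*,L$, the same exact class count, and the same reduction to $j_2=0$ and to maximizing $M$ over the simplex. Your perspective-function argument for concavity is a tidier substitute for the paper's Hessian computation. The gap is in the step that carries the content of the theorem, the stationarity system, because of a sign error. With $a=1-x_1-x_2$ one has exactly
\[
(1-x_2)H\Bigl(\frac{x_1}{1-x_2}\Bigr)=-x_1\log_2 x_1-a\log_2 a+(1-x_2)\log_2(1-x_2).
\]
Its $x_2$-derivative is $\log_2\frac{a}{1-x_2}$, not $\log_2\frac{1-x_2}{a}$; it must be nonpositive, since $a\leq 1-x_2$. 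Symmetrically, the second entropy term contributes $\log_2\frac{a}{1-x_1}$ to the $x_1$-equation. Hence
\[
\frac{\partial M}{\partial x_1}=\log_q\frac{p_1a^2}{k\,x_1(1-x_1)},\qquad
\frac{\partial M}{\partial x_2}=\log_q\frac{p_2a^2}{k\,x_2(1-x_2)},
\]
and the correct system is $p_1a^2=kx_1(1-x_1)$, $p_2a^2=kx_2(1-x_2)$.

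You flagged that the bookkeeping needs redoing, but neither system you wrote is right. Your displayed one decouples to $x_\ell=p_\ell/(k+p_\ell)$. For $k=p_1=p_2=1$ this is the boundary point $x_1=x_2=\frac12$, $a=0$, where $M=0$. Your ``expected'' relation $p_1a(1-x_1)=kx_1(1-x_2)$ fails at the claimed optimum: for $k=p_1=p_2=1$ its two sides are $(1+\sqrt5)/10\approx 0.324$ and $1/5$. So the verification you plan would not go through as written.

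With the corrected system, your verification is short and in fact more than the paper spells out. Using your $R$, you get $a^*=1-x_1^*-x_2^*=k/R$. Since $R^2-(k+p_2-p_1)^2=4kp_1$, you also get $x_1^*(1-x_1^*)=\frac{R^2-(k+p_2-p_1)^2}{4R^2}=\frac{kp_1}{R^2}$. Hence $kx_1^*(1-x_1^*)=p_1(a^*)^2$, and symmetrically for $x_2^*$.

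The same identity gives $x_\ell^*\in(0,1)$. Then $a^*\in(0,1)$ follows from $k>0$ and $R^2-k^2=2k(p_1+p_2)+(p_1-p_2)^2>0$. Note that $k<R$ gives $x_1^*+x_2^*>0$, not $<1$ as you wrote; the bound $<1$ just needs $k>0$. Since $M$ is concave, this interior stationary point is the global maximizer, so your separate boundary analysis is unnecessary.
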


\begin{IEEEproof}
Define
\begin{align*}
K& \eqdef I_1\cap I_2, &
I^*_1 &\eqdef I_1\setminus I_2, &
I^*_2 &\eqdef I_2\setminus I_1, &
L& \eqdef[q]\setminus (I_1\cup I_2).
\end{align*}
Then $\abs{K} = k$, $\abs{I^*_1} = p_1$, $\abs{I^*_2} = p_2$, and $\abs{L}=q-k-p_1-p_2$. For integers $i_1, i_2, j_1, j_2 \geq 0$ with $j_1 + j_2 + i_1+i_2 = n$, we define $\cA^{i_1,i_2,j_1,j_2}$ as the set of all sequences $\bx \in \Sigma^n$ such that $\bx$ contains exactly $i_\ell$ entries from $I_\ell^*$ for each $\ell \in [2]$, $j_1$ entries from $K$, and $j_2$ entries from $L$. We then define
\[
\cA_{\cI}^{i_1,i_2,j_1,j_2} \eqdef \set*{\bx_{\cI} : \bx \in \cA^{i_1,i_2,j_1,j_2}}.
\]
By the same method as the proof of Theorem~\ref{thm:sunflower}, we have 
\[
\abs*{\cA_{\cI}^{i_1,i_2,j_1,j_2}} = k^{j_1} p_1^{i_1}p_2^{i_2} \binom{j_1 + i_1}{i_1}\binom{j_1 + i_2}{i_2},
\]
and the capacity is
\[
\ccap(\cI) = \limsup_{n \to \infty} \frac{1}{n}\log_q \max \abs*{\cA_{\cI}^{i_1,i_2,j_1,j_2}}.
\]

 Define 
\begin{align*}
\alpha_{-1} &\eqdef \frac{j_2}{n}, & \alpha_0 &\eqdef \frac{j_1}{n}, & \alpha_1 &\eqdef \frac{i_1}{n}, & \alpha_2 &\eqdef \frac{i_2}{n},
\end{align*}
so that $\alpha_{-1} + \alpha_0 + \alpha_1 +\alpha_2 = 1$. Then
\begin{align*}
\frac{1}{n} \log_q \abs*{\cA_{\cI}^{i_1,i_2,j_1,j_2}} 
&= \alpha_0 \log_q k +  \alpha_1 \log_q p_1 + \alpha_2 \log_q p_2\\
& \qquad + (\alpha_0 + \alpha_1) H\parenv*{\frac{\alpha_1}{\alpha_0 + \alpha_1}} \log_q 2 + (\alpha_0 + \alpha_2) H\parenv*{\frac{\alpha_2}{\alpha_0 + \alpha_2}} \log_q 2 + o(1).
\end{align*}
Like in the proof of Theorem~\ref{thm:sunflower}, it is clear that the capacity is maximized when $\alpha_{-1}=0$, i.e., $\alpha_0=1-\alpha_1-\alpha_2$. Then,
\[
\ccap(\cI) = \max_{\substack{x_1+x_2\leq 1 \\ x_1,x_2\geq 0}} M(x_1,x_2).
\]

To solve this optimization problem we take the following steps. First, we compute the Hessian matrix:
\[
\bH= \begin{pmatrix}
\frac{\partial^2 M}{\partial x_1^2} & \frac{\partial^2 M}{\partial x_1 \partial x_2} \\
\frac{\partial^2 M}{\partial x_2 \partial x_1} & \frac{\partial^2 M}{\partial x_2^2} \\
\end{pmatrix}
=
\begin{pmatrix}
\frac{x_1+x_2-2x_1 x_2 -1}{x_1(1-x_1)(1-x_1-x_2)} &
-\frac{2}{1-x_1-x_2} \\
-\frac{2}{1-x_1-x_2} & \frac{x_1+x_2-2x_1 x_2 -1}{x_2(1-x_2)(1-x_1-x_2)}
\end{pmatrix}
\log_q 2.
\]
One can verify that for any $x_1,x_2>0$, $x_1+x_2<1$, and any $\bv\in\R^2$, $v\neq 0$, we have $\bv\bH \bv^\intercal<0$. Thus, $\bH$ is negative definite, and $M(x_1,x_2)$ is strictly concave. It follows that the function is maximized when $\nabla M(x_1,x_2)=0$. We note that
\begin{align*}
\frac{\partial M}{\partial x_1} & = \log_q\parenv*{\frac{(1-x_1-x_2)^2}{x_1(1-x_1)}}-\log_q k + \log_q p_1, \\
\frac{\partial M}{\partial x_2} & = \log_q\parenv*{\frac{(1-x_1-x_2)^2}{x_2(1-x_2)}}-\log_q k + \log_q p_2.
\end{align*}
Equating these two to $0$, the maximum of $M(x_1,x_2)$, $x_1,x_2\geq 0$, $x_1+x_2\leq 1$, is obtained at $(x^*_1,x^*_2)$ as in~\eqref{eq:x1x2}.
\end{IEEEproof}

\subsection{Paths}

The final irreducible set family we study is a path, which is defined as follows:

\begin{definition}
A set family $\cI = \set{I_1,\ldots,I_t}\subseteq 2^{[q]}$ is called a \emph{path of length $t$} if for all $i\in[t]$, $I_i=\set{\sigma_{i-1},\sigma_i}$, where $\sigma_0,\dots,\sigma_t\in\Sigma$ are distinct letters.
\end{definition}

While seemingly simple, finding the exact capacity of paths involves an elaborate optimization problem. As we shall soon show, this problem relies heavily on Chebyshev polynomials and their properties. We shall therefore review the relevant known facts about Chebyshev polynomials. The reader is referred to~\cite{MasHan03} for an extensive study of these polynomials.

Chebyshev polynomials have four variants. We shall require those of the second and fourth kind. The Chebyshev polynomial of the second kind is denoted by $U_i(x)$, and that of the fourth kind by $W_i(x)$. They are the unique polynomials satisfying
\begin{align}
U_i(\cos\theta) &= \frac{\sin((i+1)\theta)}{\sin \theta}, &
W_i(\cos\theta) &= \frac{\sin((i+\frac{1}{2})\theta)}{\sin (\frac{1}{2}\theta)}, \label{eq:UWsin}
\end{align}
(see~\cite[Eq.~1.4 and Eq.~1.9]{MasHan03}). It is known~\cite[Eq.~1.51]{MasHan03} that for all $i\geq 0$,
\begin{equation}
\label{eq:chebstandard}
U_i\parenv*{\frac{x+x^{-1}}{2}}=\frac{x^{i+1}-x^{-(i+1)}}{x-x^{-1}}.
\end{equation}
The roots of $U_i(x)$ are $-1 < \nu_1 < \dots < \nu_k < 1$,
\begin{equation}
\label{eq:Uroot}
\nu_k = \cos \parenv*{\frac{(i-k+1)\pi}{i+1}}, \qquad \text{for $k\in[i]$,}
\end{equation}
and those of $W_i(x)$ are $-1 < \omega_1 < \dots < \omega_k < 1$,
\begin{equation}
\label{eq:Wroot}
\omega_k = \cos\parenv*{\frac{(i-k+1)\pi}{i+\frac{1}{2}}}, \qquad\text{for $k\in[i]$,}
\end{equation}
(see~\cite[Eq.~2.4 and Eq.~2.10]{MasHan03}). Additionally~\cite[Eq.~1.18]{MasHan03},
\begin{equation}
\label{eq:WsumU}
W_i(x) = U_i(x) + U_{i-1}(x),
\end{equation}
as well as~\cite[Eq.~2.29a and Eq.~2.29b]{MasHan03}
\begin{align}
\label{eq:UW1}
U_i(1) &= i+1, & W_i(1) &= 2i+1.
\end{align}

We now turn to prove a few technical lemmas that will be instrumental in proving the main capacity result for paths.

\begin{lemma}
\label{lem:rroot}
Let $m\in(0,4)$ be a real number and recursively define
\[
r_i = \frac{(m-1)r_{i-1}-1}{r_{i-1}+1},
\]
for all $i\geq 1$, with $r_0=m-1$. Then
\[r_i = \frac{\lambda_1^{i+2} - \lambda_2^{i+2}}{\lambda_1^{i+1} - \lambda_2^{i+1}} -1,\]
where
\begin{align*}
\lambda_1 &= \frac{m+\sqrt{m^2-4m}}{2}, \\
\lambda_2 &=\frac{m-\sqrt{m^2-4m}}{2},
\end{align*}
are the (possibly complex) roots of $x^2-mx+m=0$.
\end{lemma}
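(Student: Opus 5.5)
We prove the closed form by induction on $i$, working with the shifted quantity $s_i \eqdef r_i+1$. The recursion becomes a Möbius recursion whose linearization is the order-two recurrence with characteristic polynomial $x^2-mx+m$. Write $D_j\eqdef \lambda_1^{j}-\lambda_2^{j}$. The claim is then $s_i = D_{i+2}/D_{i+1}$.

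First we rewrite the recursion in terms of $s_i$. Substituting $r_{i-1}=s_{i-1}-1$ gives
\[
s_i = \frac{(m-1)(s_{i-1}-1)-1}{s_{i-1}} + 1 = \frac{m s_{i-1} - m}{s_{i-1}} = m - \frac{m}{s_{i-1}}.
\]
The initial value is $s_0 = m$.

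Next we record the facts about $\lambda_1,\lambda_2$ that we need. By Vieta, $\lambda_1+\lambda_2=m$ and $\lambda_1\lambda_2=m$. Since $m\in(0,4)$, the discriminant $m^2-4m$ is negative, so $\lambda_1,\lambda_2$ are distinct complex conjugates and $D_{j}$ is well defined. Each $\lambda_\ell$ satisfies $\lambda_\ell^{2}=m\lambda_\ell-m$, so multiplying by $\lambda_\ell^{j-1}$ and subtracting gives
\[
D_{j+1} = m D_j - m D_{j-1}.
\]

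Now the induction. For the base case,
\[
\frac{D_2}{D_1} = \frac{(\lambda_1-\lambda_2)(\lambda_1+\lambda_2)}{\lambda_1-\lambda_2} = \lambda_1+\lambda_2 = m = s_0.
\]
For the inductive step, assume $s_{i-1}=D_{i+1}/D_{i}$. Then
\[
s_i = m - \frac{mD_{i}}{D_{i+1}} = \frac{mD_{i+1}-mD_i}{D_{i+1}} = \frac{D_{i+2}}{D_{i+1}},
\]
using the linear recurrence. Subtracting $1$ gives the formula for $r_i$.

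The main obstacle is checking that no denominator vanishes, i.e., $D_{i+1}\neq 0$, equivalently $r_{i-1}\neq -1$, so that the recursion and the formula are defined. Write $\lambda_{1,2}=\sqrt{m}\,e^{\pm i\theta}$ with $\cos\theta=\sqrt{m}/2$ and $\theta\in(0,\pi/2)$. Then $D_j = 2i\, m^{j/2}\sin(j\theta)$, and this vanishes exactly when $j\theta\in\pi\Z$, i.e., precisely when the recursion hits $r=-1$.

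We therefore state the lemma for those $i$ where the recursion is defined. By the identity above, those $i$ are exactly the ones with all $D_{1},\dots,D_{i+1}\neq 0$. In that range the induction is valid verbatim. This is also consistent with the Chebyshev form, $D_j \propto m^{j/2}U_{j-1}(\sqrt{m}/2)$ via \eqref{eq:chebstandard}, which is presumably how the lemma is used later.
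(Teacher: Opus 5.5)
Your proof is correct and is essentially the paper's argument: the same substitution $s_i=r_i+1$ giving $s_i=m-m/s_{i-1}$, and the same linearization through the recurrence with characteristic polynomial $x^2-mx+m$. The only difference is that you verify $s_i=D_{i+2}/D_{i+1}$ directly by induction, whereas the paper cites the continued-fraction convergent formulas and solves for $c_1,c_2$; the paper's $B_i$ is just $D_{i+1}/(\lambda_1-\lambda_2)$.

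Your caveat about vanishing denominators is a genuine point the paper glosses over. For example, $m=2$ gives $r_2=-1$ and $r_3$ undefined. It is harmless for the paper's use, since $m^*=2+2\cos(2\pi/(t+3))$ gives $\theta=\pi/(t+3)$, so $D_1,\dots,D_{t+2}\neq 0$ and all of $r_0,\dots,r_{t-1}$ in Theorem~\ref{thm:path} are well defined.
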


\begin{IEEEproof}
Define $s_i\eqdef r_i+1$, for all $i\geq 0$. Then writing the definition of $r_i$ in terms of $s_i$, we have $s_0=m$, and
\[
s_i = m-\frac{m}{s_{i-1}}.
\]
Hence, we have a truncated continued fraction
\[
s_i = m-\cfrac{m}{m-\cfrac{m}{m-\cfrac{m}{\ddots-\cfrac{m}{m}}}},
\]
where there are a total of $i$ fraction lines. To solve this truncated continued fraction, by~\cite[p.~15, Eq.~(1.3) and~(1.4)]{Wal48}, there exist two linear recurrences, $A_i$ and $B_i$, such
that
\begin{align*}
A_{-1}&=1, & A_0 &= m, & A_{i+1} &= m A_{i} - mA_{i-1},\\
B_{-1}&=0, & B_0 &= 1, & B_{i+1} &= m B_{i} - mB_{i-1},\\
\end{align*}
and
\[
s_i = \frac{A_i}{B_i}.
\]
By definition, $B_1 = m$, and both $A_i$ and $B_i$ satisfy the same linear recurrence, so we have $A_i=B_{i+1}$, for all $i\geq 0$.

Solving this linear recurrence is straightforward. It has a characteristic equation
\[
x^2 - mx + m = 0,
\]
with roots $\lambda_1$ and $\lambda_2$. Thus,
\[
B_i = c_1 \lambda_1^i + c_2 \lambda_2^i,
\]
with appropriately chosen constants $c_1$ and $c_2$. Using the base cases for the recursion, we can solve and get
\begin{align*}
c_1 &= \frac{\lambda_1}{\sqrt{m^2-4m}}, & c_2 &= -\frac{\lambda_2}{\sqrt{m^2-4m}}.
\end{align*}
Hence, using $A_i=B_{i+1}$, we obtain
\[
s_i = \frac{c_1 \lambda_1^{i+1} + c_2 \lambda_2^{i+1}}{c_1 \lambda_1^i + c_2 \lambda_2^i} = \frac{\lambda_1^{i+2}-\lambda_2^{i+2}}{\lambda_1^{i+1}-\lambda_2^{i+1}}.
\]
Finally, recalling that $r_i=s_i-1$, the proof is complete.
\end{IEEEproof}

Using Lemma~\ref{lem:rroot}, we can now show a connection to Chebyshev polynomials.

\begin{lemma}
\label{lem:rcheb}
Let $r_i$, $m$, and $\lambda_1,\lambda_2$, be defined as in Lemma~\ref{lem:rroot}, and define $u\eqdef \frac{m-2}{2}$. Then, for all
$i\geq 1$,
\begin{align*}
r_{2i-1} &= \frac{U_i(u)}{U_{i-1}(u)}, &
r_{2i} &= \frac{W_{i+1}(u)}{W_{i}(u)},
\end{align*}
where $U_\ell(x)$ and $W_\ell(x)$ are the Chebyshev polynomials of the second and fourth kind, respectively (see~\eqref{eq:UWsin}).
\end{lemma}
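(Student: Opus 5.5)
The plan is to start from the closed form of Lemma~\ref{lem:rroot} and rewrite it so that \eqref{eq:chebstandard} applies directly. Since $\lambda_1,\lambda_2$ are the roots of $x^2-mx+m=0$, we have $\lambda_1+\lambda_2=m$ and $\lambda_1\lambda_2=m$. I would normalize by setting $\lambda_1=\sqrt{m}\,z$ and $\lambda_2=\sqrt{m}\,z^{-1}$ for a suitable (complex) $z$. This is consistent because $\lambda_1\lambda_2=m$, and the relation $\lambda_1+\lambda_2=m$ then reads $z+z^{-1}=\sqrt{m}$. For $m\in(0,4)$ we may take $z=e^{\mathrm{i}\phi}$ on the unit circle, with $\cos\phi=\sqrt{m}/2$.

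With this substitution, $\lambda_1^a-\lambda_2^a=m^{a/2}(z^a-z^{-a})$, so Lemma~\ref{lem:rroot} becomes
\[
r_j=\sqrt{m}\,\frac{z^{j+2}-z^{-(j+2)}}{z^{j+1}-z^{-(j+1)}}-1 .
\]
Next I would put $y\eqdef z^2$. Then $y+y^{-1}=(z+z^{-1})^2-2=m-2=2u$, and \eqref{eq:chebstandard} gives
\[
U_\ell(u)=\frac{z^{2\ell+2}-z^{-(2\ell+2)}}{z^2-z^{-2}} .
\]
The key algebraic identity I would use repeatedly is
\[
\sqrt{m}\,(z^{a}-z^{-a})=(z+z^{-1})(z^{a}-z^{-a})=(z^{a+1}-z^{-(a+1)})+(z^{a-1}-z^{-(a-1)}).
\]

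\textbf{Odd indices.} Take $j=2i-1$. Applying the identity with $a=2i+1$ to the numerator, the term $z^{2i}-z^{-2i}$ cancels the $-1$. This leaves $r_{2i-1}=(z^{2i+2}-z^{-(2i+2)})/(z^{2i}-z^{-2i})$, which equals $U_i(u)/U_{i-1}(u)$ because the common factor $z^2-z^{-2}$ cancels.

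\textbf{Even indices.} First I would compute $W_\ell(u)$ from \eqref{eq:WsumU}: $W_\ell=U_\ell+U_{\ell-1}$, so its numerator over $z^2-z^{-2}$ is $(z^{2\ell+2}-z^{-(2\ell+2)})+(z^{2\ell}-z^{-2\ell})$. By the identity this is $\sqrt{m}\,(z^{2\ell+1}-z^{-(2\ell+1)})$. Hence
\[
\frac{W_{i+1}(u)}{W_i(u)}=\frac{z^{2i+3}-z^{-(2i+3)}}{z^{2i+1}-z^{-(2i+1)}} .
\]
For $j=2i$, applying the identity with $a=2i+2$ to the numerator of $r_{2i}$ produces exactly $(z^{2i+3}-z^{-(2i+3)})+(z^{2i+1}-z^{-(2i+1)})$ over $z^{2i+1}-z^{-(2i+1)}$. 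Subtracting $1$ gives the same ratio.

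\textbf{Main obstacle.} The only delicate point is legitimacy of the manipulations: the choice of square roots and division by quantities that could vanish. Since $m\in(0,4)$, we have $z^2\neq\pm1$ (as $z^2=\pm1$ would force $m=4$ or $m=0$), so $z^2-z^{-2}\neq0$ and $\lambda_1\neq\lambda_2$. The denominators $z^{a}-z^{-a}$ vanish exactly when the corresponding Chebyshev polynomial vanishes at $u$ (equivalently, when the recursion for $r_i$ hits $r_{i-1}=-1$). In that case both sides are undefined simultaneously, and I would state the lemma as an identity wherever the right-hand side is defined. The result is also independent of which square root and which of $z^{\pm1}$ is chosen, since every expression is symmetric under $z\mapsto z^{-1}$.
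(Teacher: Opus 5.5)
Your proposal is correct and is essentially the paper's proof. Your $z=\lambda_1/\sqrt{m}$ satisfies $z^2=\lambda_1^2/m=\lambda_1-1$, so it coincides (up to sign) with the paper's $z=\sqrt{s}$, $s=\lambda_1-1$, and your key identity is exactly the paper's $(z+z^{-1})A_a=A_{a+1}+A_{a-1}$ with $A_a=z^a-z^{-a}$, used to collapse $r_j$ to $A_{j+3}/A_{j+1}$ before matching with \eqref{eq:chebstandard} and \eqref{eq:WsumU}; your remark on vanishing denominators is extra care the paper omits.
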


\begin{IEEEproof}
Define $s\eqdef \lambda_1-1$. Since $m\neq 0$, we have $s\neq 0$. We observe that $\lambda_2-1 = s^{-1}$, and that 
\begin{equation}
\label{eq:ssum}
s+s^{-1}=m-2=2u.
\end{equation}
Define $z\eqdef \sqrt{s}$. Then
\begin{align*}
1+s & = z\parenv*{z+z^{-1}}, &
1+s^{-1} &= z^{-1}\parenv*{z+z^{-1}}.
\end{align*}
With these, it now follows that for all $i\geq 1$,
\begin{equation}
\label{eq:lampowdif}
\lambda_1^i-\lambda_2^i = (1+s)^i-\parenv*{1+s^{-1}}^i
= \parenv*{z+z^{-1}}^i\parenv*{z^i-z^{-i}}.
\end{equation}
Define 
\[A_i\eqdef z^i-z^{-i}.\]
Then
\begin{equation}
\label{eq:Aspread}
\parenv*{z+z^{-1}}A_i = A_{i+1}+A_{i-1}.
\end{equation}
We therefore have
\begin{align}
r_i &\overset{(a)}{=} \frac{\lambda_1^{i+2}-\lambda_2^{i+2}}{\lambda_1^{i+1}-\lambda_2^{i+1}}-1 \overset{(b)}{=} \frac{(z+z^{-1})^{i+2}A_{i+2}}{(z+z^{-1})^{i+1}A_{i+1}}-1
=\frac{(z+z^{-1})A_{i+2}}{A_{i+1}}-1 \nonumber \\
&\overset{(c)}{=} \frac{A_{i+3}+A_{i+1}}{A_{i+1}}-1 = \frac{A_{i+3}}{A_{i+1}}, \label{eq:riratio1}
\end{align}
where $(a)$ follows from Lemma~\ref{lem:rroot}, $(b)$ follows from~\eqref{eq:lampowdif}, and $(c)$ follows from~\eqref{eq:Aspread}.

We can now prove the first half of the claim:
\[
r_{2i-1} \overset{(a)}{=} \frac{A_{2i+2}}{A_{2i}} = \frac{z^{2i+2}-z^{-(2i+2)}}{z^{2i}-z^{-2i}} \overset{(b)}{=} \frac{s^{i+1}-s^{-(i+1)}}{s^i-s^{-i}}
\overset{(c)}{=}\frac{U_i\parenv*{\frac{s+s^{-1}}{2}}}{U_{i-1}\parenv*{\frac{s+s^{-1}}{2}}}=\frac{U_i(u)}{U_{i-1}(u)},
\]
where $(a)$ follows from~\eqref{eq:riratio1}, $(b)$ follows from $z=\sqrt{s}$, $(c)$ follows from~\eqref{eq:chebstandard}, and $(d)$ follows from~\eqref{eq:ssum}. The second half is similar,
\begin{align*}
r_{2i} &= \frac{A_{2i+3}}{A_{2i+1}} 
= \frac{(z+z^{-1})A_{2i+3}}{(z+z^{-1})A_{2i+1}}
= \frac{A_{2i+4}+A_{2i+2}}{A_{2i+2}+A_{2i}}\\
&= \frac{z^{2i+4}-z^{-(2i+4)}+z^{2i+2}-z^{-(2i+2)}}{z^{2i+2}-z^{-(2i+2)}+z^{2i}-z^{-2i}} \\
&=\frac{s^{i+2}-s^{-(i+2)}+s^{i+1}-s^{-(i+1)}}{s^{i+1}-s^{-(i+1)}+s^i-s^{-i}} \\
&= \frac{U_{i+1}\parenv*{\frac{s+s^{-1}}{2}}+U_i\parenv*{\frac{s+s^{-1}}{2}}}{U_i\parenv*{\frac{s+s^{-1}}{2}}+U_{i-1}\parenv*{\frac{s+s^{-1}}{2}}}=\frac{U_{i+1}(u)+U_i(u)}{U_i(u)+U_{i-1}(u)}\\
&= \frac{W_{i+1}(u)}{W_i(u)},
\end{align*}
where the last equality follows from~\eqref{eq:WsumU}.
\end{IEEEproof}

The final useful lemma is the following:

\begin{lemma}
\label{lem:chebsol}
Let $U_i(x)$ and $W_i(x)$ be Chebyshev polynomials of the second and fourth kind, respectively. Then:
\begin{enumerate}
    \item
    The solutions of
    \[
    \frac{U_i\parenv*{\frac{x-2}{2}}}{U_{i-1}\parenv*{\frac{x-2}{2}}} = \frac{1}{x-1}
    \]
    in the interval $(0,4)$ are
    \[
    x_k = 2+2\cos\parenv*{\frac{2\pi k}{2i+3}},
    \]
    for $k\in[i+1]$, and $k\neq\frac{2i+3}{3}$ if $3|i$.
    \item
    The solutions of
    \[
    \frac{W_i\parenv*{\frac{x-2}{2}}}{W_{i-1}\parenv*{\frac{x-2}{2}}} = \frac{1}{x-1}
    \]
    in the interval $(0,4)$ are
    \[
    x_k = 2+2\cos\parenv*{\frac{2\pi k}{2i+2}},
    \]
    for $k\in[i]$, and $k\neq\frac{2(i+1)}{3}$ if $3|(i+1)$.
\end{enumerate}
\end{lemma}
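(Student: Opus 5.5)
The plan is to parametrize the interval $(0,4)$ trigonometrically, so that both Chebyshev ratios collapse to ratios of sines via~\eqref{eq:UWsin}. Then product-to-sum identities reduce each equation to an equality of two cosines.

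\textbf{Step 1: substitution.} Write $x=2+2\cos\theta$ with $\theta\in(0,\pi)$. This is a bijection onto $(0,4)$, and it gives $\frac{x-2}{2}=\cos\theta$. The right-hand side also becomes a sine ratio:
\[
x-1=1+2\cos\theta=\frac{\sin(3\theta/2)}{\sin(\theta/2)},
\]
so $\frac{1}{x-1}=\frac{\sin(\theta/2)}{\sin(3\theta/2)}$.

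\textbf{Step 2: part 1.} By~\eqref{eq:UWsin} the left-hand side is $\sin((i+1)\theta)/\sin(i\theta)$. Cross-multiplying, the equation becomes
\[
\sin((i+1)\theta)\sin(3\theta/2)=\sin(i\theta)\sin(\theta/2).
\]
Applying $2\sin a\sin b=\cos(a-b)-\cos(a+b)$ to both sides, the terms $\cos((i-\tfrac12)\theta)$ cancel. What remains is
\[
\cos((i+\tfrac52)\theta)=\cos((i+\tfrac12)\theta).
\]
So either $(i+\tfrac52)\theta=\pm(i+\tfrac12)\theta+2\pi k$ for some integer $k$.
\begin{itemize}
\item The minus sign gives $(2i+3)\theta=2\pi k$. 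The condition $\theta\in(0,\pi)$ forces $k\in[i+1]$, which yields $x_k=2+2\cos\frac{2\pi k}{2i+3}$.
\item The plus sign gives $2\theta=2\pi k$, which has no solution in $(0,\pi)$.
\end{itemize}

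\textbf{Step 3: undoing the cross-multiplication.} This is the step needing care. We must discard roots of the cross-multiplied equation at which the original equation is undefined, i.e., where one of its denominators vanishes.
\begin{itemize}
\item If $\sin(i\theta)=0$ at a root, the cross-multiplied equation forces $\sin((i+1)\theta)\sin(3\theta/2)=0$. The case $\sin((i+1)\theta)=0$ together with $\sin(i\theta)=0$ would force $\sin\theta=0$, which is impossible. So this case reduces to the next one.
\item If $\sin(3\theta/2)=0$, then in $(0,\pi)$ we have $\theta=2\pi/3$. This coincides with one of the $\theta_k$ exactly when $k=\frac{2i+3}{3}$ is an integer, i.e., when $3\mid i$. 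At this point $x-1=0$, so it must be excluded.
\end{itemize}
All remaining $\theta_k$ have nonzero denominators, so they are genuine solutions. This gives the stated list.

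\textbf{Step 4: part 2.} This is identical, using $W_i(\cos\theta)/W_{i-1}(\cos\theta)=\sin((i+\tfrac12)\theta)/\sin((i-\tfrac12)\theta)$. The cross-multiplied equation is
\[
\sin((i+\tfrac12)\theta)\sin(3\theta/2)=\sin((i-\tfrac12)\theta)\sin(\theta/2).
\]
After the product-to-sum step, the terms $\cos((i-1)\theta)$ cancel, leaving $\cos((i+2)\theta)=\cos(i\theta)$. The relevant branch gives $(2i+2)\theta=2\pi k$ with $k\in[i]$. The same exclusion argument removes $\theta=2\pi/3$, which corresponds to $k=\frac{2(i+1)}{3}$ when $3\mid(i+1)$.

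The main obstacle is Step 3, namely verifying exactly which roots of the cross-multiplied equation are spurious. The trigonometric reduction itself is routine.
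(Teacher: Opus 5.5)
Your proof is correct and follows essentially the same route as the paper: substitute $\frac{x-2}{2}=\cos\theta$, use~\eqref{eq:UWsin} to turn both sides into ratios of sines, solve to get $\theta=\frac{2\pi k}{2i+3}$ (resp.\ $\frac{2\pi k}{2i+2}$), and discard $\theta=2\pi/3$, where $x-1$ vanishes. The differences are cosmetic. You use product-to-sum identities and restrict to $\theta\in(0,\pi)$ from the outset, while the paper passes to $z=e^{j\theta}$, factors $(z+1)(z^{2i+3}-1)=0$, and then removes $k=0$, the excluded $k$, and duplicates; your Step~3 also usefully makes explicit that $\theta=2\pi/3$ is the only spurious root.
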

\begin{IEEEproof}
If $x\in(0,4)$, then $\frac{x-2}{2}\in(-1,1)$. We start with the first claim. Set $\frac{x-2}{2}=\cos\theta$. By~\eqref{eq:UWsin}, the equation we are trying to solve becomes
\[
\frac{\sin((i+1)\theta)}{\sin(i\theta)}=\frac{1}{2\cos\theta+1}.
\]
Denote $j\eqdef\sqrt{-1}$ and $z\eqdef e^{j\theta}$. Then $\cos(\ell\theta)=\frac{1}{2}(z^\ell+z^{-\ell})$ while $\sin(\ell\theta)=\frac{1}{2j}(z^\ell-z^{-\ell})$. The above equation then becomes
\[
\frac{z^{i+1}-z^{-(i+1)}}{z^i-z^{-i}} = \frac{1}{z+z^{-1}+1}.
\]
After rearranging we get
\[
(z+1)(z^{2i+3}-1)=0.
\]
We note that $z=-1$ is not a solution to the original equation. So we are left with solving $z^{2i+3}=1$, which gives candidate solutions
\[
\theta = \frac{2\pi k}{2i+3},
\]
for any integer $0\leq k \leq 2i+2$. We rule out $k=0$ since this causes a division by $0$ on the LHS, as well as $k=\frac{2i+3}{3},\frac{2(2i+3)}{3}$ (if those are integers) since these cause a division by $0$ on the RHS. We also eliminate duplicates, since $\cos\theta = \cos (-\theta)$. Returning to the original variable $x$ we have the desired solution.

The second claim proceeds along the same lines. By~\eqref{eq:UWsin} our equation becomes
\[
\frac{\sin\parenv*{\parenv*{i+\frac{1}{2}}\theta}}{\sin\parenv*{\parenv*{i-\frac{1}{2}}\theta}}=\frac{1}{2\cos\theta+1}.
\]
Rewriting this in $z$ and rearranging we get
\[
(z+1)(z^{2i+2}-1)=0.
\]
Once again $z=-1$ is not a solution. Solving $z^{2i+2}=1$ gives candidate solutions
\[
\theta=\frac{2\pi k}{2i+2},
\]
with $0\leq k\leq 2i+1$. The solutions where $k=0,i+1$ are discarded, as well as $k=\frac{2(i+1)}{3},\frac{4(i+1)}{3}$, if integers. Duplicates are removed, thus giving us the desired claim.
\end{IEEEproof}

We are now in a position to state and prove the capacity of a path of length $t$.

\begin{theorem}
\label{thm:path}
Fix an alphabet $\Sigma=[q]$. Let $\cI = (I_1,\dots,I_t)$ be path of length $t\geq 3$. Let
\[
m^* = 2+2\cos\parenv*{\frac{2\pi}{t+3}},
\]
and let $r^*_i$, $i\geq 0$ be given recursively by $r^*_0=m^*-1$, and for all $i\geq 1$,
\[
r^*_i = \frac{(m^*-1)r^*_{i-1}-1}{r^*_{i-1}+1}.
\]
For $0\leq i\leq t$, let
\[
\alpha^*_i = \frac{\prod_{j=0}^{i-1}r^*_j}{\sum_{\ell=0}^t\prod_{j=0}^{\ell-1}r^*_j}.
\]
Then
\[\ccap(\cI) = \sum_{i\in [t]}(\alpha^*_{i-1}+\alpha^*_i) H\parenv*{\frac{\alpha^*_{i}}{\alpha^*_{i-1} + \alpha^*_{i}}} \log_q 2.\]
\end{theorem}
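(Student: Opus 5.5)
Following the proofs of Theorem~\ref{thm:sunflower} and Theorem~\ref{thm:twosets}, I would partition $\Sigma$ into the singletons $\set{\sigma_0},\dots,\set{\sigma_t}$ and the rest $L$. The outputs are then determined by the count vector $(i_0,\dots,i_t)$ of occurrences of each $\sigma_i$, together with the interleavings. Channel $I_\ell$ outputs a binary word with $i_{\ell-1}$ copies of $\sigma_{\ell-1}$ and $i_\ell$ copies of $\sigma_\ell$. Conversely, any tuple of such interleavings with consistent counts is realizable. I would show this by induction along the path: given a word on $\set{\sigma_0,\dots,\sigma_{\ell-1}}$ consistent with the first $\ell-1$ outputs, insert the $\sigma_\ell$'s relative to the $\sigma_{\ell-1}$'s as $\bx_{I_\ell}$ dictates. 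The letters of $L$ are appended arbitrarily. Hence
\[
\abs*{\cA_\cI^{i_0,\dots,i_t,j}}=\prod_{\ell\in[t]}\binom{i_{\ell-1}+i_\ell}{i_\ell},
\]
with at most polynomially many count classes. With $\alpha_i=i_i/n$, using~\eqref{eq:binom} and discarding $L$ as before, $\ccap(\cI)$ is the maximum of
\[
F(\alpha_0,\dots,\alpha_t)=\sum_{\ell\in[t]}(\alpha_{\ell-1}+\alpha_\ell)H\parenv*{\frac{\alpha_\ell}{\alpha_{\ell-1}+\alpha_\ell}}\log_q 2
\]
over the simplex $\sum_i\alpha_i=1$, $\alpha_i\geq 0$.

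Each summand is the perspective of an entropy, so $F$ is concave, and any interior stationary point of the Lagrangian is a global maximizer. Writing $G(a,b)=(a+b)H(b/(a+b))$, we have $\partial_a G=\log_2\frac{a+b}{a}$. The KKT conditions therefore read
\[
\frac{\alpha_0+\alpha_1}{\alpha_0}=\frac{(\alpha_{i-1}+\alpha_i)(\alpha_i+\alpha_{i+1})}{\alpha_i^2}=\frac{\alpha_{t-1}+\alpha_t}{\alpha_t}=m
\]
for all $1\leq i\leq t-1$, where $m=2^{\lambda}$ with $\lambda$ the multiplier. I would set $r_i=\alpha_{i+1}/\alpha_i$. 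The first equation gives $r_0=m-1$. The middle equations give $(1+r_{i-1}^{-1})(1+r_i)=m$, which rearranges to $r_i=\frac{(m-1)r_{i-1}-1}{r_{i-1}+1}$, precisely the recursion of Lemma~\ref{lem:rroot}. The last equation is the closing condition $r_{t-1}=1/(m-1)$. Thus $\alpha_i\propto\prod_{j<i}r_j$, which gives the stated $\alpha^*_i$ after normalization, provided the correct $m$ is selected.

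The main obstacle is choosing $m$. The closing condition must be solved with all $r_i>0$ so that the point is interior. Using Lemma~\ref{lem:rcheb}, $r_{t-1}$ is $U_i(u)/U_{i-1}(u)$ when $t=2i$, or $W_i(u)/W_{i-1}(u)$ when $t=2i+1$, where $u=(m-2)/2$. Lemma~\ref{lem:chebsol} then lists the solutions in $(0,4)$ as $2+2\cos(2\pi k/(t+3))$ in both parities. I would argue that only $k=1$, the largest root $m^*$, keeps all $r_i$ positive. By Lemma~\ref{lem:rcheb}, $r_{2i-1}>0$ and $r_{2i}>0$ require $U$ and $W$ to have no sign change on the relevant indices, which holds when $u$ exceeds the largest root of $U_{\lfloor t/2\rfloor}$ or $W$; \eqref{eq:Uroot} and \eqref{eq:Wroot} confirm this for $\theta=2\pi/(t+3)$. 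For $k\geq 2$ the angle is larger, and some intermediate ratio becomes negative. I also need to rule out $m\geq 4$. There the $r_i$ are given by real $\lambda$'s, and I expect a monotonicity argument to show $r_{t-1}>1/(m-1)$. Once the interior solution is established, strict concavity, or at least concavity together with the KKT conditions, makes it the global maximum. Boundary maximizers are excluded because $\partial F/\partial\alpha_i=+\infty$ whenever $\alpha_i=0$ and a neighbour is positive. Substituting $\alpha^*$ into $F$ gives the stated capacity.
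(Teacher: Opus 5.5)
Your proposal is correct and follows essentially the same route as the paper: the exact count $\prod_{\ell\in[t]}\binom{i_{\ell-1}+i_\ell}{i_\ell}$ via iterative realizability along the path, concavity of the objective (you via perspectives of $H$, the paper via the Hessian), the same Lagrange recursion for $r_i=\alpha_{i+1}/\alpha_i$ solved with Lemmas~\ref{lem:rroot}, \ref{lem:rcheb} and~\ref{lem:chebsol}, and the same positivity check at $m^*$ using the largest roots of the $U_\ell$ and $W_\ell$. Two minor remarks: for $t=2i+1$ the closing ratio is $r_{2i}=W_{i+1}(u)/W_i(u)$ rather than $W_i(u)/W_{i-1}(u)$ (the roots $2+2\cos(2\pi k/(t+3))$ you state correspond to the correct index), and your extra steps of showing that $k\geq 2$ fails and ruling out $m\geq 4$ are unnecessary, since concavity already makes any interior stationary point a global maximizer.
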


\begin{IEEEproof}
Let $I_i=\set{\sigma_{i-1},\sigma_i}$ for all $i\in [t]$, and where $\sigma_0,\dots,\sigma_t\in\Sigma$ are distinct letters. We partition the alphabet $\Sigma$ into $t+2$ parts: the singletons
$\set{\sigma_0},\dots, \set{\sigma_t}$, and the letters not on the path, $L \eqdef \Sigma \setminus \set{\sigma_0,\dots,\sigma_t}$.
For non-negative integers $a_0,a_1 \dots, a_t, b\geq 0$, with $b+ \sum_{i \in [0,t]} a_i  = n$, we define $\cA^{a_0,a_1\ldots,a_t,b}$ as the set of all sequences $\bx \in \Sigma^n$ such that $\bx$ contains exactly $a_i$ entries of $\sigma_i$, for each $i \in [0,t]$, and $b$ entries from $L$. We then define
\[
\cA_{\cI}^{a_0,a_1,\ldots,a_t,b} \eqdef \set*{\bx_{\cI} : \bx \in \cA^{a_0,a_1,\ldots,a_t,b}}.
\]
The number of ways of choosing $a_0,a_1,\ldots,a_t,b$ is upper bounded by $(n+1)^{t+1}$ (each of the $a_i$ is an integer chosen between $0$ and $n$, and $b$ completes the sum, if possible, to $n$), and therefore
\[
\max_{a_0,\dots,a_t,b}\abs*{\cA_\cI^{a_0,\dots,a_t,b}} \leq \abs*{\cA_\cI} \leq \sum_{a_0,\dots,a_t,b}\abs*{\cA_\cI^{a_0,\dots,a_t,b}} \leq (n+1)^t \max_{a_0,\dots,a_t,b}\abs*{\cA_\cI^{a_0,\dots,a_t,b}}.
\]
Since $\frac{1}{n}\log_q (n+1)^{t+1} \to 0$ as $n\to\infty$, the desired capacity is 
\[
\ccap(\cI) = \limsup_{n \to \infty} \frac{1}{n}\log_q \max_{a_0,\dots,a_t,b} \abs*{\cA_{\cI}^{a_0,a_1,\ldots,a_t,b}}.
\]

Next, we find $\abs{\cA_{\cI}^{a_0,a_1,\ldots,a_t,b}}$. Assume $\bx_{\cI} = (\bx_{I_1}, \ldots, \bx_{I_t}) \in \cA_{\cI}^{a_0,a_1,\ldots,a_t,b}$. For each $i\in[t]$, $\bx_{I_i}$ is a sequence with $a_{i-1}$ occurrences of the letter $v_{i-1}$ and $a_i$ occurrences of $v_i$. Hence, there are at most $\binom{a_{i-1}+a_i}{a_i}$ possible such sequences. In total, we have
\[
\abs*{\cA_{\cI}^{a_0,a_1,\ldots,a_t,b}} \leq \prod_{i=1}^t \binom{a_{i-1}+a_i}{a_i}.
\]
We contend this counting is exact, i.e., any possible sequence is attainable. To see this, assume an arbitrary choice of $(\bx_1,\dots,\bx_t)$, where for all $i\in[t]$, $\bx_i$ contains exactly $a_{i-1}$ occurrences of $\sigma_{i-1}$, exactly $a_i$ occurrences of $\sigma_i$, and nothing else. We show that there exists a sequence $\bx\in\Sigma^n$ such that $\bx_\cI=(\bx_1,\dots,\bx_t)$. Construct $\bx$ iteratively as follows. Start with the sequence of length $a_0$ containing only the letter $\sigma_0$. Then insert $a_1$ copies of $\sigma_1$ so that the sequence agrees with $\bx_1$. Next, do the same with $a_2$ copies of $\sigma_2$, so there is agreement with $\bx_2$. Continue the process until agreeing with $\bx_t$. Finally, insert arbitrary $b$ letters from $L$ in arbitrary positions. By construction, $\bx_\cI=(\bx_1,\dots,\bx_t)$. It follows that
\begin{equation}
\label{eq:absAI}
\abs*{\cA_{\cI}^{a_0,a_1,\ldots,a_t,b}} = \prod_{i=1}^t \binom{a_{i-1}+a_i}{a_i}.
\end{equation}

When maximizing $\abs{\cA_\cI^{a_0,\dots,a_t,b}}$, we note that
\[
\abs*{\cA_\cI^{a_0,\dots,a_{t-1},a_t,b}} \leq \abs*{\cA_\cI^{a_0,\dots,a_{t-1},a_t+b,0}}.
\]
Thus, the maximum is attained when $b=0$. Let us now define $\alpha_i\eqdef \frac{a_i}{n}$ for all $i\in [0,t]$, so $\sum_{i\in[0,t]}\alpha_i=1$. Let
\[
G(\alpha_0,\ldots,\alpha_t) = \sum_{i\in [t]}(\alpha_{i-1}+\alpha_i) H\parenv*{\frac{\alpha_{i}}{\alpha_{i-1} + \alpha_{i}}} \log_q 2.
\]
Then by~\eqref{eq:absAI} and~\eqref{eq:binom},
\[
    \frac{1}{n} \log_q \abs*{\cA_{\cI}^{a_1,\ldots,a_t,0}} = G(\alpha_0,\ldots,\alpha_t) + o(1),
\]
which implies that
\[
\ccap(\cI) = \max_{\substack{\alpha_0,\dots,\alpha_t\geq 0\\ \alpha_0+\dots+\alpha_t=1}} G(\alpha_0,\dots,\alpha_t).
\]

We contend that $G(\alpha_0,\dots,\alpha_t)$ is concave in our domain. To show that, we examine the Hessian, $\bH$, where $\bH_{i,j}=\frac{\partial^2 G}{\partial \alpha_i \partial \alpha_j}$. By calculation,
 \begin{align*}
    \frac{\partial^2 G}{\partial \alpha_0^2} &= \frac{-\alpha_1}{\alpha_0(\alpha_0+\alpha_1)\ln q}, \\
    \frac{\partial^2 G}{\partial \alpha_i^2} &= \frac{-\alpha_{i-1}}{\alpha_{i}(\alpha_{i-1}+\alpha_i)\ln q} + \frac{-\alpha_{i+1}}{\alpha_{i}(\alpha_{i+1}+\alpha_i)\ln q},\quad \text{ for } 1\leq i \leq t-1, \\
    \frac{\partial^2 G}{\partial \alpha_t^2} &= \frac{-\alpha_{t-1}}{\alpha_{t}(\alpha_{t-1}+\alpha_t)\ln q}, \\
    \frac{\partial^2 G}{\partial \alpha_i\partial \alpha_{i+1}} &= \frac{\partial^2 G}{\partial \alpha_{i+1}\partial \alpha_i}= \frac{1}{(\alpha_i+\alpha_{i+1})\ln q},~\text{ for } 0\leq i \leq t-1,\\
    \frac{\partial^2 G}{\partial \alpha_i\partial \alpha_{j}} &= 0,~\text{ for } \abs{i-j}\geq 2.
 \end{align*}
For any vector $\bv=(v_0,v_1,\ldots,v_t)\in \R^{t+1}$, $\bv\neq 0$, and any $(\alpha_0,\dots,\alpha_t)\in (0,1)^{t+1}$, we have
\begin{align*}
    \bv \bH \bv^\intercal &= \sum_{i=0}^t \frac{\partial^2 G}{\partial \alpha_i^2}v_i^2 + 2\sum_{i=0}^{t-1} \frac{\partial^2 G}{\partial \alpha_i\partial \alpha_{i+1}} v_i v_{i+1}\\
    &=\sum_{i=0}^{t-1} \frac{-1}{(\alpha_i+\alpha_{i+1})\ln q} \parenv*{\sqrt{\frac{\alpha_{i+1}}{\alpha_i}}v_i - \sqrt{\frac{\alpha_{i}}{\alpha_{i+1}}}v_{i+1}}^2\\
    &\leq 0.
\end{align*}
Thus, $G(\alpha_0,\dots,\alpha_t)$ is concave over $[0,1]^{t+1}$, and also with the added constraint of $\alpha_0+\dots+\alpha_t=1$. It follows that it suffices to find a local maximum in this domain, which will also give us the global maximum.

To maximize $G(\alpha_0,\dots,\alpha_t)$ under the constraints $\alpha_0,\dots,\alpha_t\geq 0$, $\alpha_0+\dots+\alpha_t=1$, we employ the Lagrange multiplier method. Define
\begin{align*}
\cL(\alpha_0,\dots,\alpha_t,\lambda) & \eqdef G(\alpha_0,\dots,\alpha_t)+\lambda\cdot g(\alpha_0,\dots,\alpha_t), \\
g(\alpha_0,\dots,\alpha_t) &\eqdef \alpha_0+\dots+\alpha_t-1.
\end{align*}
We now need to solve $\nabla \cL = 0$. The first $t+1$ derivatives, with respect to $\alpha_0,\dots,\alpha_t$, give us the following equations:
\begin{equation}
\label{eq:mainder}
\begin{split}
0 &= \frac{\partial \cL}{\partial \alpha_0} = \log_q \frac{\alpha_0+\alpha_1}{\alpha_0} +\lambda,\\
0 &= \frac{\partial \cL}{\partial \alpha_i} = \log_q \frac{\alpha_{i-1}+\alpha_i}{\alpha_i} + \log_q \frac{\alpha_{i+1}+\alpha_i}{\alpha_i}+\lambda,\quad \text{ for } 1\leq i \leq t-1,\\
0&=\frac{\partial \cL}{\partial \alpha_t} = \log_q \frac{\alpha_{t-1}+\alpha_t}{\alpha_t}+\lambda,
\end{split}
\end{equation}
and the last derivative, with respect to $\lambda$, gives us the
constraint back,
\begin{equation}
\label{eq:const}
0 = \frac{\partial \cL}{\partial \lambda} = \alpha_0+\dots+\alpha_t-1.
\end{equation}

Define $m\eqdef q^{-\lambda}$, and $r_i \eqdef \frac{\alpha_{i+1}}{\alpha_i}$, for all $i\in [0,t-1]$. We then have~\eqref{eq:mainder} become
\begin{align*}
    m &= 1 + r_0,\\ 
    m &= (1+1/r_{i-1})(1+r_i), \quad \text{ for } 1\leq i \leq t-1,\\
    m &= 1 + \frac{1}{r_{t-1}}. 
\end{align*}
Since all the $\alpha_i$ are non-negative, so must be the $r_i$, which we shall later guarantee. Rearranging for $r_i$, we obtain
\begin{align}
r_0 &= m-1, \label{eq:recurbase} \\ 
r_i &= \frac{(m-1)r_{i-1}-1}{r_{i-1}+1}, \quad \text{ for } 1\leq i \leq t-1, \label{eq:recurrule} \\
r_{t-1} &= \frac{1}{m-1}.  \label{eq:recurend}
\end{align}
We shall attempt to find solutions involving $m\in(0,4)$ (since by previous arguments, it suffices to find a local maximum, which by concavity attains the global maximum). Using~\eqref{eq:recurbase} and~\eqref{eq:recurrule} with Lemma~\ref{lem:rcheb}, the recursion is solved giving an explicit formula for the $r_i$. In particular, for $r_{t-1}$ we obtain
\[
r_{t-1} = \begin{cases}
\frac{U_{t/2}(\frac{m-2}{2})}{U_{(t-2)/2}(\frac{m-2}{2})} & \text{$t$ is even,} \\
\frac{W_{(t+1)/2}(\frac{m-2}{2})}{W_{(t-1)/2}(\frac{m-2}{2})} & \text{$t$ is odd.}
\end{cases}
\]

Now that we have found $r_{t-1}$ we can use~\eqref{eq:recurend} in order to find $m$. If $t$ is even, by~\eqref{eq:recurend} we need to solve
\[
\frac{U_{t/2}(\frac{m-2}{2})}{U_{(t-2)/2}(\frac{m-2}{2})} = \frac{1}{m-1}.
\]
By Lemma~\ref{lem:chebsol} there are several solutions. We need to pick one that guarantees all of $r_0,r_1,\dots,r_{t-1}$ are non-negative. We contend the largest root satisfies this, so we
choose
\[
m=2+2\cos\parenv*{\frac{2\pi}{t+3}}.
\]
We observe that indeed $m\in (0,4)$, and $r_0 > 0$ since $m>2$. Additionally, $r_1,\dots,r_{t-1}$ use ratios of $U_0,\dots,U_{t/2}$ and $W_1,\dots,W_{t/2}$. All of these have $U_\ell(1),W_\ell(1)>0$ (see~\eqref{eq:UW1}), and the rightmost root happens to be of $U_{t/2}$, which is $\cos(\frac{2\pi}{t+2})$. But the $m$ we chose gives $\frac{m-2}{2}=\cos(\frac{2\pi}{t+3})> \cos(\frac{2\pi}{t+2})$, and so all of $U_\ell(\frac{m-2}{2}),W_\ell(\frac{m-2}{2})$ are positive, thus $r_0,\dots,r_{t-1}>0$.

A similar arguments holds for $t$ odd. By~\eqref{eq:recurend} we solve
\[
\frac{W_{(t+1)/2}(\frac{m-2}{2})}{W_{(t-1)/2}(\frac{m-2}{2})} = \frac{1}{m-1}.
\]
We use Lemma~\ref{lem:chebsol} and again choose
\[
m = 2+2\cos\parenv*{\frac{2\pi}{t+3}}.
\]
As before, $m\in(0,4)$ and $r_0>0$. Our $r_1,\dots,r_{t-1}$ use ratios of $U_0,\dots,U_{(t-1)/2}$ and $W_1,\dots,W_{(t+1)/2}$. Their largest root is that of $W_{(t+1)/2}$, which is $\cos(\frac{2\pi}{t+2})$, so again, $r_0,\dots,r_{t-1}>0$ for the same reasons as in the even case.

To conclude, we set the desired sequence $\alpha_0,\dots,\alpha_t$ to have ratios $r_0,\dots,r_{t-1}$, and normalize them to satisfy~\eqref{eq:const}, so
\[
\alpha_i = \frac{\prod_{j=0}^{i-1}r_j}{\sum_{\ell=0}^t\prod_{j=0}^{\ell-1}r_j},
\]
which completes our proof.
\end{IEEEproof}

\section{Bounds on the Capacity}
\label{sec:bound}

In this section our goal is to provide bounds on the capacity, in cases where an exact capacity is not known. We start by showing a reduction of any sequence of coloring channels to another sequence of coloring channels with the exact same capacity, but with channels containing only two letters. This provides an intuitive explanation to a result from~\cite{BarWacYaa25}, and allows us to prove a general bound. Then, with an eye on the special case of $q=4$, we note that even after all the results of Section~\ref{sec:exact}, there is a single missing case for which we do not know the capacity, and that is for a sequence of coloring channels that form a cycle. We prove a specialized bound for this system.

We start with the reduction approach, and the following definition:

\begin{definition}
Given a set system $\cI\subseteq 2^\Sigma$, we define the \emph{pairs graph of $\cI$} as $P_\cI=(\Sigma,E_\cI)$, with edges
\[
E_\cI \eqdef \set*{\set{u,v} : u,v \in I, u\neq v, I \in \cI}.
\]  
\end{definition}

By imposing an arbitrary ordering on the set of edges, $E_\cI$, we can treat it as a sequence of $\abs{E_\cI}$ coloring channels, each defined by an edge, thus containing exactly two letters.

\begin{example}
Assume $q=4$, and let $\cI=\set{I_1,I_2}$, with $I_1=\set{1,2,3}$ and $I_2=\set{2,3,4}$. Then the pairs graph, $P_\cI$ has vertices $\Sigma=[q]=\set{1,2,3,4}$, and edges $E_{\cI}=\set{\set{1,2},\set{1,3},\set{2,3},\set{2,4},\set{3,4}}$. If we arbitrarily order $E_\cI$ we obtain a sequence of five coloring channels, each with two letters.
\end{example}

We shall relate the capacity of $E_\cI$ to that of $\cI$ in the following theorem. Since the case of a single coloring channel is already solved, we shall focus on two or more coloring channels.

\begin{theorem}\label{thm:pairs graph}
Fix an alphabet $\Sigma=[q]$, and an irreducible sequence of coloring channels $\cI=(I_1,\dots,I_t)$, $t\geq 2$, $I_i\subseteq \Sigma$ for all $i$. Then for all $n$ we have 
\[
\abs*{\cA_{\cI}}=\abs*{\cA_{E_\cI}},
\]
and therefore also,
\[
\ccap(\cI) = \ccap(E_{\cI}).
\]
\end{theorem}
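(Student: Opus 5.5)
The plan is to show that the two confusability relations on $\Sigma^n$ coincide. That is, for all $\bx,\by\in\Sigma^n$ we want $\bx_\cI=\by_\cI$ if and only if $\bx_{E_\cI}=\by_{E_\cI}$. Since $\abs{\cA_\cI}$ and $\abs{\cA_{E_\cI}}$ are the numbers of equivalence classes of these two relations, $\abs{\cA_\cI}=\abs{\cA_{E_\cI}}$ would follow for every $n$, and the equality of capacities is then immediate from the definition.

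First, we check that $E_\cI$ is a legitimate description of $\cI$, i.e., that every channel has at least two letters. Suppose some $I_i=\set{a}$ were a singleton. Since $t\geq 2$ and $\cI$ is irreducible, $I_i$ cannot be disjoint from the union of the other channels, since otherwise $S=\set{i}$ would separate $\cI$. Hence $a\in I_j$ for some $j\neq i$, giving $I_i\subseteq I_j$, which contradicts irreducibility. So $\abs{I_i}\geq 2$ for all $i$, and every letter of every $I_i$ lies on some edge inside $I_i$.

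The direction $\bx_\cI=\by_\cI \Rightarrow \bx_{E_\cI}=\by_{E_\cI}$ is the easy observation already used in Section~\ref{sec:exact}. Each edge $\set{u,v}$ is contained in some $I_i$, and $\bx_{\set{u,v}}$ is obtained from $\bx_{I_i}$ by deleting the other letters, so it is determined by $\bx_{I_i}$.

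The main step is the converse: a sequence $\bz$ over an alphabet $I$ with $\abs{I}\geq 2$ is determined by its projections $\bz_{\set{u,v}}$ for all pairs $u\neq v$ in $I$. I would apply this with $\bz=\bx_{I_i}$, noting that $(\bx_{I_i})_{\set{u,v}}=\bx_{\set{u,v}}$. The argument I would give runs as follows. First, the projections fix the number of occurrences of each letter $u\in I$, since $u$ lies in some pair (this is where $\abs{I}\geq 2$ is used). So $\bx_{I_i}$ and $\by_{I_i}$ have the same length. Next, suppose they differ, and let $k$ be the first position where they differ, with letters $a\neq b$ respectively. Both are in $I_i$, so $\set{a,b}\in E_\cI$. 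The common prefix of length $k-1$ contributes the same number of $a$'s and $b$'s to both $\bx_{\set{a,b}}$ and $\by_{\set{a,b}}$. The next $\set{a,b}$-symbol is then $a$ in $\bx_{\set{a,b}}$ and $b$ in $\by_{\set{a,b}}$, so these projections differ, which is a contradiction. Hence $\bx_{I_i}=\by_{I_i}$ for all $i$.

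I expect the only real subtlety to be the role of irreducibility, which is exactly the exclusion of singleton channels handled above. Without it, a singleton channel $\set{a}$ with $a$ appearing in no other channel would contribute no edges, and information about the count of $a$ would be lost. The rest is routine, and combining the two directions gives $\abs{\cA_\cI(n)}=\abs{\cA_{E_\cI}(n)}$ for all $n$, hence $\ccap(\cI)=\ccap(E_\cI)$.
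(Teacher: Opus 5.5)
Your proposal is correct and follows essentially the same approach as the paper. The easy direction projects each channel output onto its pairs. The converse shows that the pairwise projections determine each $\bx_{I}$; you do this by a first-difference contradiction, while the paper uses an explicit symbol-by-symbol elimination decoder, and both rest on the same fact. Your derivation of $\abs{I_i}\geq 2$ from irreducibility and $t\geq 2$ also makes explicit a step the paper only asserts.
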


\begin{IEEEproof}
Denote $E_{\cI} = (J_1, \dots, J_m)$. Observe that since $t\geq 2$ and $\cI$ is irreducible, necessarily $\abs{I_i}\geq 2$ for all $i\in[t]$.

In the first direction, for any $\bx, \by \in \Sigma^n$ with $\bx_{E_{\cI}} \neq \by_{E_{\cI}}$, there exists some $\ell \in [m]$ such that $\bx_{J_\ell} \neq \by_{J_\ell}$. By definition, $J_\ell \subseteq I$ for some $I \in \cI$. Then clearly $\bx_I \neq \by_I$, which implies $\bx_{\cI} \neq \by_{\cI}$. It follows that any reconstruction code for $E_{\cI}$ is a reconstruction code for $\cI$, and hence, $\abs{\cA_{E_{\cI}}} \leq \abs{\cA_{\cI}}$.

In the other direction, assume $\cC$ is a reconstruction code for $\cI$. We will show that it is also a reconstruction code for $E_\cI$. Our strategy will be the following. We will prove that for any codeword $\bx\in\cC$, we can use $\bx_{E_\cI}$ to deduce $\bx_\cI$, and then find $\bx$, thus showing $\cC$ can indeed reconstruct its codewords under the channels $\bx_{E_\cI}$.

Fix some $\bx\in\cC$, and some $I\in\cI$. Denote the edges contributed to $E_\cI$ by $I$ as
\[
\binom{I}{2}=(K_1,\dots,K_s).
\]
We shall show how to find $\bx_I$ from $\bx_{K_1},\dots,\bx_{K_s}$. As observed at the beginning, $\abs{I}\geq 2$, and so $s\geq 1$. Assume $\bx_I = (z_1, \dots, z_{n'})$. We know the composition of $\bx_I$ by counting the number of occurrences of each symbol in $\bx_{K_1},\dots,\bx_{K_s}$. We then discover the identity of $z_1$ by the following process. If $a,b\in\Sigma$, assume $K_i=\set{a,b}$. Whichever of $a$ and $b$ that \emph{does not} appear first in $\bx_{K_i}$, is ruled out of being $z_1$. Since we can do this for every pair of possible symbols, the identity of $z_1$ is uniquely discovered. We then discard the first symbol of $\bx_I$, and the first occurrence of $z_1$ from any $\bx_{K_i}$, and repeat the process to find $z_2$, and so on.

Having found $\bx_I$ from $\bx_{K_1},\dots,\bx_{K_s}$, we can repeat the process for any $I\in\cI$. Thus, from $\bx_{E_\cI}$ we can find $\bx_I$. Since $\cC$ is a reconstruction code for $\cI$, we can find $\bx$. Thus, any reconstruction code for $\cI$ is also a reconstruction code for $E_{\cI}$, and so $\abs{\cA_\cI}\leq \abs{\cA_{E_\cI}}$.

Combining the two inequalities, we infer that $\abs{\cA_\cI}=\abs{\cA_{E_{\cI}}}$, and by definition, $\ccap(\cI)=\ccap(E_{\cI})$.
\end{IEEEproof}

We note that Theorem~\ref{thm:pairs graph} generalizes one direction of~\cite[Theorem 4]{BarWacYaa25}. The latter showed that $\abs{\cA_{\cI}}=q^n$ if and only if every pair of letters is contained in some coloring channel. In Theorem~\ref{thm:pairs graph} this is equivalent to the pairs graph being a $q$-clique, hence, the sequence of coloring channels $\cI$ is equivalent (in terms of optimal code size and capacity) to a single coloring channel that contains all the letters, $I=[q]$, which trivially allows a reconstruction code of length $n$ and size $q^n$.

Also, as a result of simple monotonicity, we obtain the following straightforward corollary:

\begin{corollary}
\label{cor:subgraph}
Let $\cI$ and $\cI'$ be two irreducible sequences of coloring channels over $\Sigma$. If $P_{\cI}$ is a subgraph of $P_{\cI'}$, then
\[ \abs*{\cA_{\cI}} \leq \abs*{\cA_{\cI'}},\]
and
\[
\ccap(\cI) \leq \ccap(\cI').
\]
\end{corollary}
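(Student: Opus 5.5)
The plan is to pass through Theorem~\ref{thm:pairs graph} and reduce the comparison of $\cI$ and $\cI'$ to a comparison of their edge sets. Since both $\cI$ and $\cI'$ are irreducible, the theorem gives $\abs{\cA_\cI}=\abs{\cA_{E_\cI}}$ and $\abs{\cA_{\cI'}}=\abs{\cA_{E_{\cI'}}}$ for every $n$. So it will suffice to show $\abs{\cA_{E_\cI}}\leq\abs{\cA_{E_{\cI'}}}$, where both sides are now sequences of two-letter channels. The hypothesis that $P_\cI$ is a subgraph of $P_{\cI'}$ (same vertex set $\Sigma$) means exactly that $E_\cI\subseteq E_{\cI'}$.

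For the comparison, I will argue through confusability. Take any $\bx,\by\in\Sigma^n$ with $\bx_{E_{\cI'}}=\by_{E_{\cI'}}$. Then $\bx_J=\by_J$ for every $J\in E_{\cI'}$, and in particular for every $J\in E_\cI$, so $\bx_{E_\cI}=\by_{E_\cI}$. Hence the confusability partition of $\Sigma^n$ under $E_{\cI'}$ refines the one under $E_\cI$. The number of equivalence classes can only grow under refinement, which gives $\abs{\cA_{E_\cI}}\leq\abs{\cA_{E_{\cI'}}}$. Equivalently, every reconstruction code for $E_\cI$ is also one for $E_{\cI'}$.

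Chaining these, $\abs{\cA_\cI}\leq\abs{\cA_{\cI'}}$ holds for all $n$. Taking $\frac1n\log_q$ and $\limsup$ then yields $\ccap(\cI)\leq\ccap(\cI')$.

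The only point needing care is the applicability of Theorem~\ref{thm:pairs graph}, which assumes $t\geq 2$. If $\cI$ (or $\cI'$) consists of a single channel $I$, I will check the identity directly. In that case $P_\cI$ is the clique on $I$, and the letter-by-letter decoding argument from that theorem's proof shows that $\bx_I$ is recovered from the pairwise projections whenever $\abs{I}\geq2$. The degenerate case $\abs{I}=1$ is handled trivially, since then the pairs graph has no edges and $\bx_I$ is determined by its length alone, i.e., the count of that letter. I expect this edge-case bookkeeping to be the only obstacle; the main argument is immediate monotonicity.
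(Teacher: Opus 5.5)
Your argument is essentially the paper's: the corollary is presented there as immediate monotonicity after Theorem~\ref{thm:pairs graph}. You pass to $E_\cI\subseteq E_{\cI'}$ and note that adding channels can only refine the confusability classes. One small correction to your edge-case remark: for $\cI=(I)$ with $\abs{I}=1$ the identity $\abs{\cA_\cI}=\abs{\cA_{E_\cI}}$ actually fails ($n+1$ versus $1$, since $E_\cI=\emptyset$), so there you should instead argue directly that $\abs{\cA_{\cI'}}\geq n+1=\abs{\cA_\cI}$ for every $\cI'$, since a single channel of $\cI'$ already produces at least $n+1$ distinct outputs.
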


Another remark we make is that a sequence of coloring channels, $\cI$, that induces a pairs graph $P_\cI$, may have a smaller equivalent sequence of coloring channels, $\cI'$ with the same graph. To find that, we need to find the smallest edge-clique cover of $P_\cI$. Each clique in this cover becomes a channel in $\cI'$. The number of channels required in $\cI'$ is then the \emph{intersection number} of $P_\cI$ (see~\cite{ErdGooPos66}).

Finding general bounds on $\ccap(\cI)$ for arbitrary $\cI$, is a difficult problem. We present a crude lower and upper bound that are based on the pairs graph, $P_\cI$.

\begin{theorem}\label{thm:general}
Fix an alphabet $\Sigma=[q]$, and an irreducible sequence of coloring channels $\cI=(I_1,\dots,I_t)$, $t\geq 2$, $I_i\subseteq\Sigma$ for all $i\in[t]$. Let $P_\cI$ be the pairs graph of $\cI$, and let $k\eqdef \omega(P_\cI)$ denote the size of the largest clique in $P_\cI$. Then
\[\log_q k \leq \ccap(\cI) \leq \log_q (kte).\]
\end{theorem}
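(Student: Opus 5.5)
The proof has two independent halves, both routed through the pairs graph and Theorem~\ref{thm:pairs graph}.

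\emph{Lower bound.} Let $C\subseteq\Sigma$ be a maximum clique of $P_\cI$, so $\abs{C}=k$. Take the single channel $\cI'=(C)$. Its pairs graph is the clique on $C$, which is a subgraph of $P_\cI$. Since $\cI'$ has only one channel, I will not invoke Corollary~\ref{cor:subgraph} directly. Instead I argue as in the first direction of the proof of Theorem~\ref{thm:pairs graph}. For each edge $\set{u,v}\subseteq C$ there is some $I\in\cI$ containing it, so $\bx_\cI$ determines $\bx_{\set{u,v}}$ for every pair in $C$. By the symbol-by-symbol peeling argument in that proof, these pair projections determine $\bx_C$. Hence distinct values of $\bx_C$ force distinct values of $\bx_\cI$, giving $\abs{\cA_\cI}\geq\abs{\cA_{(C)}}$. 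By \cite[Lemma 1]{BarWacYaa25}, $\ccap((C))=\log_q k$, so $\ccap(\cI)\geq\log_q k$.

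\emph{Upper bound.} Count outputs directly. Each $\bx_{I_i}$ is a word over $I_i$ of length $n_i\leq n$, and every clique of $P_\cI$ has size at most $k$. Each $I_i$ is a clique in $P_\cI$, so $\abs{I_i}\leq k$. The tuple of outputs must also be consistent, and I expect consistency to be the main obstacle: counting each channel independently gives about $k^{tn}$ outputs, which is far too many.

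To handle this, I will count through the transmitted sequence with a weak encoding. The outputs are determined by the composition of $\bx$ on $\bigcup I_i$ together with, for each channel, the interleaving pattern. The naive count is $\prod_i\binom{n}{n_i}\,k^{n_i}$, which is again too large. A better parametrization writes $\bx$ restricted to $U=\bigcup_i I_i$ as a sequence of length $N\leq n$. For each position $j$, record only the set of channels containing $x_j$ together with the letter. This information determines all the $\bx_{I_i}$. The letter $x_j$ determines its channel-membership pattern, so the output is determined by $\bx_U$, and that alone gives at most $\abs{U}^n$.

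To reach $kte$, I bound $\abs{\cA_\cI}$ by the number of ways to describe the outputs channel by channel. Let $n_i=\abs{\bx_{I_i}}$. Each symbol of $\bx_U$ appears in at least one output, so $\sum_i n_i\geq N$. The tuple is determined by choosing, for each channel, the letters of $\bx_{I_i}$ together with a marking of which of its symbols are the \enquote{first appearance} of a symbol of $\bx_U$ under some fixed assignment of each letter to one channel containing it. The marked symbols from all channels then need to be merged into one sequence of length $N$.

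I expect the cleaner route to be this. Assign each letter $a\in U$ to one channel $\phi(a)\ni a$. Then $\bx_U$ is an interleaving of the $t$ sequences $\bx_{\phi^{-1}(i)}$, and each of these is a subsequence of $\bx_{I_i}$. The outputs are determined by the sequences $\bx_{\phi^{-1}(i)}$, of which there are at most $k^{N}$ choices in total, together with the merge, a multinomial that is at most $t^N$. This gives $\abs{\cA_\cI}\leq\sum_N\binom{n}{N}(kt)^N$, but as stated this is too crude.

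The remaining step is to drop the $\binom{n}{N}$ factor. The positions of letters outside $U$ are invisible in the output, so only $\bx_U$ matters, and $\abs{\cA_\cI}\leq\sum_{N\leq n}(kt)^N\leq (n+1)(kt)^n$. That already gives $\log_q(kt)$, which is at most $\log_q(kte)$. The factor $e$ is presumably slack from an estimate like $\binom{N}{N_1,\dots,N_t}$ or $\binom{n}{m}\leq (en/m)^m$ in the intended argument.

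The step I would double-check is that the outputs are indeed a function of the merge pattern together with the $\bx_{\phi^{-1}(i)}$. They are, because these data reconstruct $\bx_U$ exactly, and $\bx_U$ determines every $\bx_{I_i}$.
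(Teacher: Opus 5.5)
Your proof is correct. The lower bound follows the paper, but the upper bound takes a different and much simpler route that actually gives a stronger result.

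For the lower bound you use the same idea as the paper, namely monotonicity in the pairs graph. You rerun the peeling argument of Theorem~\ref{thm:pairs graph} directly for the single channel $C$, which avoids relying on Corollary~\ref{cor:subgraph} (that corollary rests on a theorem stated only for $t\geq 2$).

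For the upper bound, the paper proceeds as follows:
\begin{enumerate}
\item It enlarges every channel to $I_i'=K\cup I_i$.
\item It partitions the alphabet into $K,R_1,\dots,R_t,L$.
\item It bounds each composition class by $k^{j_1+\sum_\ell i_\ell}\prod_{\ell}\binom{j_1+\sum_{s\le \ell}i_s}{i_\ell}$.
\item It maximizes the resulting entropy expression. The factor $te$ comes from $tH(1/t)\le\log_2(te)$, not from an estimate like $\binom{n}{m}\le(en/m)^m$ as you guessed.
\end{enumerate}

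Your final argument instead observes that $\bx_\cI$ is a function of $\bx_U$, where $U=\bigcup_i I_i$. Since every channel is a clique of $P_\cI$, $\abs{U}\le\sum_i\abs{I_i}\le kt$. Hence $\abs{\cA_\cI(n)}\le (n+1)(kt)^n$, so $\ccap(\cI)\le\log_q\abs{U}\le\log_q(kt)$. This is strictly stronger than the stated bound. The factor $e$ is therefore an artifact of the paper counting letter identities and interleavings separately, and the paper's heavier machinery buys nothing extra here.

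In a write-up, drop the exploratory dead ends: the $\prod_i\binom{n}{n_i}k^{n_i}$ count, the $\binom{n}{N}(kt)^N$ count, and the $\phi$-decomposition, which is just a roundabout way of saying $\abs{U}\le kt$. State the two-line argument directly.
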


\begin{IEEEproof}
Let $K$ be a largest clique in the graph $P_{\cI}$, with size $\abs{K}=\omega(P_{\cI}) \eqdef k$. For the lower bound, we use Corollary~\ref{cor:subgraph} with $K$ being a subgraph of $P_{\cI}$. Since the capacity of a clique is $\log_q k$, the lower bound is proved.

We turn to prove the upper bound. By abuse of notation, let $K$ denote the vertices of the clique, i.e., $K\subseteq\Sigma$. Since every coloring channel $I_i$ creates a clique in the pairs graph $P_\cI$, by definition we have that $\abs{I_i}\leq k$ for all $i\in[t]$. We define the following sequence of coloring channels,
$\cI' = \set{I_1', \dots, I_t'}$, where $I_i' = K \cup I_i$ for all $i\in[t]$ (and if necessary, removing coloring channels that are contained in another coloring channel, so that $\cI'$ is irreducible). Since $P_{\cI} \subseteq P_{\cI'}$, we have $\ccap(\cI) \leq \ccap(\cI')$ by Corollary~\ref{cor:subgraph}. Thus, we continue by finding an upper bound on $\ccap(\cI')$.

Define 
\begin{align*}
I_0' & \eqdef K, & I'_{\leq \ell} & \eqdef \bigcup_{j\in[\ell]} I_j',\\
Q_i & \eqdef I_i'\cap I_{\leq i-1}', & R_i & \eqdef I_i' \setminus Q_i,
\end{align*}
for all $i\in [t]$. It is clear that $K \subseteq Q_i$, and that
\[\abs*{R_i} \leq  \abs*{I_i \setminus K} \leq \max \set*{\abs*{I_j}: j\in [t]} \leq k.\]
Additionally, $K,R_1,\dots,R_t$ are pairwise disjoint. Define $L \eqdef \Sigma \setminus (K\cup R_1\cup\dots\cup R_t)$.

For integers $i_1, \dots, i_t, j_1, j_2 \geq 0$ with $j_1+ j_2 + \sum_{\ell \in [t]} i_\ell  = n$, we define $\cA^{i_1,\ldots,i_t,j_1,j_2}$ as the set of all sequences $\bx \in \Sigma^n$ such that $\bx$ contains exactly $i_\ell$ entries from $R_\ell$ for each $\ell \in [t]$, $j_1$ entries from $K$, and $j_2$ entries from $L$. We then define
\[
\cA_{\cI'}^{i_1,\ldots,i_t,j_1,j_2} \eqdef \set*{\bx_{\cI'} : \bx \in \cA^{i_1,\ldots,i_t,j_1,j_2}}.
\]
Since there are at most $(n+1)^{t+2}$ ways of choosing $i_1,\dots,i_t,j_1,j_2$, we have
\[
\abs*{\cA_{\cI'}} \leq (n+1)^{t+2} \max_{i_1,\dots,i_t,j_1,j_2}\abs*{\cA_{\cI'}^{i_1,\ldots,i_t,j_1,j_2}}.
\]
By the definition of capacity,
\[
\ccap(\cI') \leq \limsup_{n \to \infty} \frac{1}{n}\log_q \max_{i_1,\dots,i_t,j_1,j_2} \abs*{\cA_{\cI'}^{i_1,\ldots,i_t,j_1,j_2}}.
\]

To upper bound $\abs{\cA_{\cI'}^{i_1,\dots,i_t,j_1,j_2}}$, we consider the following iterative process. Assume $\bx_{\cI'} = (\bx_{I_1'}, \ldots, \bx_{I_t'}) \in  \cA_{\cI'}^{i_1,\ldots,i_t,j_1,j_2}$. For the upper bound with first need to choose which $j_1$ letters from $K$ appear, and in which order, for a total of $k^{j_1}$ options. Then, looking at channel $I'_1$ we need the identity of $i_1$ letters, for which we have $\abs{R_1}^{i_1}$ options. However, these need to be placed among the previously $j_1$ letters, in at most $\binom{j_1+i_1}{i_1}$ ways. For channel $I'_2$ we have $\abs{R_2}^{i_2}$ choice of letters, which need to be placed among the previously chosen letters, $Q_2$. Since $\abs{Q_2}\leq j_1+i_1$, there are at most $\binom{j_1+i_1+i_2}{i_2}$ ways of placing those letters in the output of channel $I'_2$. Continuing along these lines, we obtain
\begin{align*}
\abs*{\cA_{\cI'}^{i_1,\ldots,i_t,j_1,j_2}} &\leq  k^{j_1} \prod_{\ell \in [t]} \abs*{R_i}^{i_\ell} \prod_{\ell\in [t]} \binom{j_1+\sum_{s=1}^{\ell} i_s}{i_\ell}\\
&\leq k^{j_1+\sum_{\ell \in [t]} i_\ell} \prod_{\ell\in [t]} \binom{j_1+\sum_{s=1}^{\ell} i_s}{i_\ell},
\end{align*}
where we used the fact that $\abs{R_i}\leq \abs{I_i}\leq k$.

Define 
\begin{align*}
\alpha_{-1} &\eqdef \frac{j_2}{n}, & \alpha_0 &\eqdef \frac{j_1}{n}, & \alpha_\ell &\eqdef \frac{i_\ell}{n},
\end{align*}
for all $\ell\in[t]$, so that $\alpha_{-1} + \alpha_0 + \sum_{\ell \in [t]} \alpha_\ell = 1$. Then, together with~\eqref{eq:binom},
\begin{align*}
\frac{1}{n} \log_q \abs*{\cA_{\cI'}^{i_1,\dots,i_t,j_1,j_2}}\leq &\sum_{\ell=0}^t \alpha_\ell \log_q k + \sum_{\ell \in [t]} \parenv*{\sum_{s=0}^\ell \alpha_s} H\parenv*{\frac{\alpha_\ell}{\sum_{s=0}^\ell \alpha_s}} \log_q 2 + o(1).
\end{align*}
Obviously, the maximum is attained at $\alpha_{-1}=0$. We then have 
\begin{align*}
\ccap(\cI) & \leq \ccap(\cI') \\
&\leq \log_q k +\max_{\substack{\alpha_0,\ldots,\alpha_t \geq 0 \\ \sum_{\ell=0}^t \alpha_\ell = 1}}  \sum_{\ell \in [t]} \parenv*{\sum_{s=0}^\ell \alpha_s} H\parenv*{\frac{\alpha_\ell}{\sum_{s=0}^\ell \alpha_s}} \log_q 2\\
&\overset{(a)}{\leq} \log_q k + \max_{\substack{\alpha_0,\ldots,\alpha_t \geq 0\\ \sum_{\ell=0}^t \alpha_\ell = 1}} \sum_{\ell \in [t]}  H(\alpha_\ell) \log_q 2 \\
& \overset{(b)}{\leq} \log_q k + \parenv*{t H(1/t) + \log_2 e}\log_q 2 \\
& \overset{(c)}{\leq} \log_q (kte),
\end{align*}
where $(a)$ follows from concavity of $H$ so $xH(y/x)\leq H(y)$, for $x\geq y$, $(b)$ follows again from concavity of $H$ implying the optimization yields $\alpha_1=\dots=\alpha_t$, and $(c)$ follows from the fact that $tH(1/t)-\log_2 t$ is a strictly increasing function with a limit of $\log_2 e$ as $t\to\infty$.
\end{IEEEproof}

The bound above is rather weak. We now study the only missing case in the catalog of coloring channels over $q=4$, which is a cycle.

\begin{definition}
\label{def:cycle}
A set family $\cI = \set{I_1,\ldots,I_t}\subseteq 2^{[q]}$ is called a \emph{cycle of length $t$} if for all $i\in[t]$, $I_i=\set{\sigma_{i-1},\sigma_i}$ (with indices taken cyclically, i.e., $I_1=\set{\sigma_t,\sigma_1}$), where $\sigma_1,\dots,\sigma_t\in\Sigma$ are distinct letters.
\end{definition}

We note that cycles of length $1$ or $2$ are degenerate, and a cycle of length $3$ has a pairs graph that is a clique. Thus, we focus on the unsolved cases of cycles of length $t\geq 4$.

\begin{theorem}
\label{thm:cycle}
Fix an alphabet $\Sigma=[q]$. Let $\cI=(I_1,\dots,I_t)$ be a cycle of length $t\geq 4$. Then
\[
c_{t-1} \leq \ccap(\cI) \leq \begin{cases}
\parenv*{\frac{1}{\sqrt{3}}+\parenv*{1+\frac{1}{\sqrt{3}}}H(2-\sqrt{3})}\log_q 2 \approx \log_q 3.732 & t=4 \\
\log_q 4 & t \geq 5,
\end{cases}
\]
where $c_{t-1}$ is the capacity of a path of length $t-1$ as given in Theorem~\ref{thm:path}.
\end{theorem}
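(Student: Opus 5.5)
The lower bound follows from monotonicity. Deleting one edge from the cycle, say $\set{\sigma_t,\sigma_1}$, leaves a path of length $t-1$, namely the letters $\sigma_1,\dots,\sigma_t$ with consecutive pairs. The pairs graph of that path, viewed as a sequence of two-letter channels, is a subgraph of $P_\cI$. Both sequences are irreducible (for $t\geq 4$ no channel contains another, and neither is separable), so Corollary~\ref{cor:subgraph} gives $\ccap(\cI)\geq c_{t-1}$, with $c_{t-1}$ given by Theorem~\ref{thm:path}.

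For $t\geq 5$ the upper bound is a direct counting argument in the style of the proof of Theorem~\ref{thm:path}. Let $a_1,\dots,a_t$ be the numbers of occurrences of $\sigma_1,\dots,\sigma_t$, and let $b$ be the number of letters outside the cycle. The output of channel $I_i$ is a binary word with $a_{i-1}$ copies of $\sigma_{i-1}$ and $a_i$ copies of $\sigma_i$ (indices cyclic). Hence
\[
\abs*{\cA_\cI^{a_1,\dots,a_t,b}}\leq \prod_{i=1}^t\binom{a_{i-1}+a_i}{a_i}\leq \prod_{i=1}^t 2^{a_{i-1}+a_i}=2^{2\sum_i a_i}\leq 4^n.
\]
Here we only need an upper bound, not exact attainability. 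Summing over the at most $(n+1)^{t+1}$ compositions contributes only a polynomial factor, which gives $\ccap(\cI)\leq\log_q 4$. The same argument works for $t=4$, but it is not tight there.

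For $t=4$ I would again use Corollary~\ref{cor:subgraph}, now in the opposite direction. Take $J_1=\set{\sigma_1,\sigma_2,\sigma_3}$ and $J_2=\set{\sigma_1,\sigma_3,\sigma_4}$. The pairs graph of $(J_1,J_2)$ is $K_4$ with the edge $\set{\sigma_2,\sigma_4}$ removed, and it contains the $4$-cycle $P_\cI$. The system $(J_1,J_2)$ is irreducible, so $\ccap(\cI)\leq\ccap((J_1,J_2))$. The latter is covered by Theorem~\ref{thm:twosets} with $k=2$ and $p_1=p_2=1$. Then $\sqrt{(k+p_1+p_2)^2-4p_1p_2}=\sqrt{12}=2\sqrt3$, so
\[
x_1^*=x_2^*=\frac12-\frac{1}{2\sqrt3}.
\]

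The main work is the computation: substitute these values into $M(x_1,x_2)$ and simplify to $\parenv*{\frac{1}{\sqrt3}+\parenv*{1+\frac1{\sqrt3}}H(2-\sqrt3)}\log_q 2$. I expect this because $1-x_2^*=\frac12+\frac1{2\sqrt3}$ and $\frac{x_1^*}{1-x_2^*}=\frac{\sqrt3-1}{\sqrt3+1}=2-\sqrt3$. With these, the two entropy terms combine to $(1+\frac1{\sqrt3})H(2-\sqrt3)$. The linear term is $(1-x_1^*-x_2^*)\log_q 2=\frac{1}{\sqrt3}\log_q 2$. Finally, I would check numerically that the result is $\log_q(2+\sqrt3)\approx\log_q 3.732$.
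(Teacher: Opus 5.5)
Your proposal is correct and matches the paper's proof in every step: the same path-deletion lower bound via Corollary~\ref{cor:subgraph}, the same enlargement of the $4$-cycle to $(\set{\sigma_1,\sigma_2,\sigma_3},\set{\sigma_1,\sigma_3,\sigma_4})$ for $t=4$, and the same per-channel bound $\binom{a_{i-1}+a_i}{a_i}\leq 2^{a_{i-1}+a_i}$ for $t\geq 5$ (the paper phrases this as $H\leq 1$). The only difference is that you evaluate the enlarged system with Theorem~\ref{thm:twosets} instead of treating it as a $(2,1,2)$-sunflower under Theorem~\ref{thm:sunflower}; both give the same value, and your final numerical check can be replaced by an exact one, since $(2-\sqrt3)(2+\sqrt3)=1$ and $(\sqrt3-1)^2=2(2-\sqrt3)$ show the constant is exactly $\log_q(2+\sqrt3)$.
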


\begin{IEEEproof}
For the lower bound we note that removing one channel, say, $I_t$, results in a path of length $t-1$, whose pairs graph is a subgraph of the pairs graph of the cycle. Thus, the lower bound follows by Corollary~\ref{cor:subgraph}.

For the upper bound, let $I_i=\set{\sigma_{i-1},\sigma_i}$, as in Definition~\ref{def:cycle}. When $t=4$, the pairs graph, $P_\cI$, is simply a cycle of length $4$, containing the edges $\set{\sigma_1,\sigma_2},\set{\sigma_2,\sigma_3},\set{\sigma_3,\sigma_4},\set{\sigma_4,\sigma_1}$. Define $\cI'=(I'_1,I'_2)$, with $I'_1=\set{\sigma_1,\sigma_2,\sigma_3}$, and $I'_2=\set{\sigma_3,\sigma_4,\sigma_1}$. We note that $P_\cI$ is a subgraph of $P_{\cI'}$, and so $\ccap(\cI)\leq \ccap(\cI')$ by Corollary~\ref{cor:subgraph}. But $\cI'$ is a $(2,1,2)$-sunflower. The upper bound in this case is given by Theorem~\ref{thm:sunflower}.

The final case is the upper bound for $t\geq 5$. We follow a similar logic as in the proof of Theorem~\ref{thm:path}. Let $\cA^{a_1,\dots,a_t,b}$ denote the set of sequences of length $n$ over $\Sigma$, with $a_i$ occurrences of $\sigma_i$, for all $i\in[t]$, and $b$ occurrences of letters from $\Sigma\setminus\set{\sigma_1,\dots,\sigma_t}$. Define
\[
\cA^{a_1,\dots,a_t,b}_\cI \eqdef \set*{ \bx_{\cI} : \bx\in\cA^{a_1,\dots,a_t,b}}.
\]
As in the proof of Theorem~\ref{thm:path},
\[
\ccap{\cI} = \limsup_{n\to\infty}\frac{1}{n}\log_q \max_{a_1,\dots,a_t,b}\abs*{\cA^{a_1,\dots,a_t,b}_\cI}.
\]
If we observe a general $\bx_\cI=(\bx_{I_1},\dots,\bx_{I_t})\in \cA^{a_1,\dots,a_t,b}_\cI$, then there are at most $\binom{a_{i-1}+a_i}{a_i}$ possible sequences for $\bx_{I_i}$, so
\[
\abs*{\cA^{a_1,\dots,a_t,b}_\cI} \leq \prod_{i\in[t]}\binom{a_{i-1}+a_i}{a_i},
\]
where indices are taken cyclically. Additionally, the maximal size is obviously obtained when $b=0$. Writing $\alpha_i\eqdef \frac{a_i}{n}$ and using~\eqref{eq:binom}, we therefore have
\begin{align*}
\ccap(\cI) &= \limsup_{n\to\infty}\frac{1}{n}\log_q \max_{a_1,\dots,a_t,b}\abs*{\cA^{a_1,\dots,a_t,b}_\cI} \\
& \leq \max_{\substack{\alpha_1,\dots,\alpha_t\geq 0 \\ \alpha_1+\dots+\alpha_t=1}} \sum_{i\in[t]} (\alpha_{i-1}+\alpha_i) H\parenv*{\frac{\alpha_i}{\alpha_{i-1}+\alpha_i}}\log_q 2 \\
& \leq \max_{\substack{\alpha_1,\dots,\alpha_t\geq 0 \\ \alpha_1+\dots+\alpha_t=1}} \sum_{i\in[t]} (\alpha_{i-1}+\alpha_i) \log_q 2 \\
& = 2\log_q 2 = \log_q 4.
\end{align*}
\end{IEEEproof}

\section{Conclusion}
\label{sec:conc}

In this paper we studied the problem of determining the capacity of the coloring channel. Previous results~\cite{BarWacYaa25}, managed to find the exact capacity of a single channel, two channels that form a $(q-2,1,2)$-sunflower, or equal-sized channels that form disjoint sets. We generalized the latter in Lemma~\ref{lem:trivial} to any separable sequence of coloring channels. We also generalized the former in Theorem~\ref{thm:sunflower} to any $(k,p,t)$-sunflower. We also added exact capacities for two arbitrary intersecting sets in Theorem~\ref{thm:twosets}, as well as paths in Theorem~\ref{thm:path}. We showed that the capacity in fact depends entirely on the pairs graph, which we used to give bounds on the capacity of general coloring channels. We concluded by giving a bound specifically for pairs graphs that form a cycle.

In light of the pairs-graph approach, when the alphabet is ternary, $q=3$, there only two irreducible sequences of coloring channels that use all three letters. The capacities of these two can already be deduced from the results of~\cite{BarWacYaa25}, as shown in Table~\ref{tab:q3}. However, for $q=4$ there are six cases, only two of which covered by~\cite{BarWacYaa25}. As a consequence of our results, we can now give the exact capacity of all irreducible coloring channels over an alphabet of size $q=4$, except for the case of a cycle, in which we only have bounds. This catalog of capacities is given in Table~\ref{tab:q4}, where degenerate cases are omitted. The omitted cases include separable coloring channels, or channels that do not use all of the alphabet letters. These may be reduced to smaller alphabets, and are easily solvable using the tools given in this paper.

Several open question remain. First, when looking at Table~\ref{tab:q4} and Theorem~\ref{thm:cycle}, we do not yet have a closed form solution for the capacity of a cycle. More generally, the exact capacities obtained in this paper are for pairs graphs all of whose cycles are contained in cliques. Solving these kinds sequences of coloring channels seems a challenging combinatorial optimization problem.

Second, while this paper finds the exact capacity of several sequences of coloring channels, we still lack a nice description of the reconstruction codes attaining the capacity asymptotically.

Finally, if such codes as above are found, an important component is missing for applying these codes in practice: we need to find efficient encoding and reconstruction procedures. The former translate arbitrary user messages into codewords, while the latter see the channel outputs and reconstruct the original transmitted codeword.

\begin{table*}
\caption{The capacity of all irreducible coloring channels over an alphabet of size $3$, that use all possible letters, to $5$ significant digits}
\label{tab:q3}
\begin{center}
\renewcommand{\arraystretch}{1.2}
\begin{tabular}{cclll}
\hline\hline
$P_\cI$ & $\ccap(\cI)$ & Example minimal $\cI$ & Type & Location \\
\hline\hline
\includegraphics[width=1.5em]{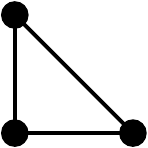} & 1 & $(\set{1,2,3})$ & clique & \cite[Lemma 1]{BarWacYaa25} \\
\includegraphics[width=1.5em]{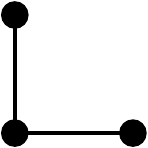} & $0.87604$ & $(\set{1,3},\set{2,3})$ & $(1,1,2)$-sunflower or two sets & \cite[Th.~2]{BarWacYaa25} or Theorem~\ref{thm:sunflower} or Theorem~\ref{thm:twosets}\\
\hline\hline
\end{tabular}
\end{center}
\end{table*}

\begin{table*}
\caption{The capacity of all irreducible coloring channels over an alphabet of size $4$, that use all possible letters, to $5$ significant digits}
\label{tab:q4}
\begin{center}
\renewcommand{\arraystretch}{1.2}
\begin{tabular}{cclll}
\hline\hline
$P_\cI$ & $\ccap(\cI)$ & Example minimal $\cI$ & Type & Location \\
\hline\hline
\includegraphics[width=1.5em]{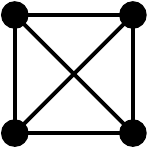} & 1 & $(\set{1,2,3,4})$ & clique & \cite[Lemma 1]{BarWacYaa25} \\
\includegraphics[width=1.5em]{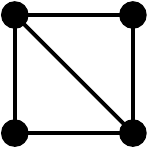} & $0.94998$ & $(\set{1,2,3},\set{1,3,4})$ & $(2,1,2)$-sunflower or two sets & \cite[Th.~2]{BarWacYaa25} or Theorem~\ref{thm:sunflower} or Theorem~\ref{thm:twosets}\\
\includegraphics[width=1.5em]{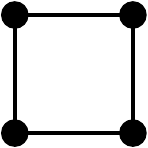} & $\in[0.79248,0.94998]$ & $(\set{1,2},\set{2,3},\set{3,4},\set{4,1})$ & cycle of length $4$ & Theorem~\ref{thm:cycle}\\
\includegraphics[width=1.5em]{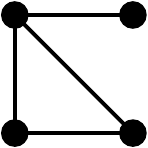} & $0.88578$ & $(\set{1,2},\set{1,3,4})$ & two sets & Theorem~\ref{thm:twosets}\\
\includegraphics[width=1.5em]{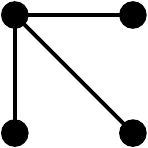} & $0.82720$ & $(\set{1,2},\set{1,3},\set{1,4})$ & $(1,1,3)$-sunflower & Theorem~\ref{thm:sunflower}\\
\includegraphics[width=1.5em]{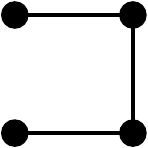} & $0.79248$ & $(\set{1,2},\set{2,3},\set{3,4})$ & path of length $3$ & Theorem~\ref{thm:path}\\
\hline\hline
\end{tabular}
\end{center}
\end{table*}

\bibliographystyle{IEEEtranS}
\bibliography{allbib}

\end{document}